\documentclass{article}

\usepackage{microtype}
\usepackage{graphicx}
\usepackage{subcaption}
\usepackage{booktabs} 
\usepackage{tikz}
\usetikzlibrary{quantikz2,shapes,arrows,positioning,calc}

\usepackage{hyperref}

\usepackage{algorithm}
\usepackage{algorithmic}
\usepackage{footnote}
\usepackage[preprint]{icml2026}

\usepackage{amsmath}
\usepackage{amssymb}
\usepackage{mathtools}
\usepackage{amsthm}
\usepackage{braket} % for Dirac notation
\usepackage{xcolor}
\DeclareMathOperator*{\argmin}{argmin}
\usepackage{subcaption}
\usepackage[capitalize,noabbrev]{cleveref}

\theoremstyle{plain}
\newtheorem{theorem}{Theorem}[section]

\newtheorem{lemma}[theorem]{Lemma}
\newtheorem{corollary}[theorem]{Corollary}
\theoremstyle{definition}
\newtheorem{definition}[theorem]{Definition}

\theoremstyle{remark}

\usepackage{tabularx}
\usepackage{multirow}
\usepackage[textsize=tiny]{todonotes}

\usepackage{tablefootnote}
\usepackage{xspace}
\newcommand{\nqtabs}{\textbf{\textbf{NNQA\xspace}}}
\newcommand{\nq}{\textbf{\texttt{NNQA\xspace}}}

\icmltitlerunning{NNQA: Neural-Native Quantum Arithmetic for End-to-End Polynomial Synthesis}

\begin{document}

\twocolumn[
  \icmltitle{NNQA: Neural-Native Quantum Arithmetic for End-to-End Polynomial Synthesis}

  % Author information will be automatically removed for blind review
\icmlsetsymbol{equal}{*}

\begin{icmlauthorlist}
  \icmlauthor{Ziqing Guo}{ttu,comp}
  \icmlauthor{Jie Li}{ttu}
  \icmlauthor{Yong Chen}{ttu}
  \icmlauthor{Ziwen Pan}{ttu}
\end{icmlauthorlist}

\icmlaffiliation{ttu}{Department of Computer Science, Texas Tech University, TX, USA}
\icmlaffiliation{comp}{Lawrence Berkeley National Laboratory, CA, USA}

  \icmlcorrespondingauthor{Ziwen Pan}{ziwen.pan@ttu.edu}

  \icmlkeywords{Quantum Neural Networks, Neural Operators, Quantum Computing, Neural Quantum States, Variational Quantum Algorithms}

  \vskip 0.3in
]

\printAffiliationsAndNotice{}

\begin{abstract}
Hybrid classical–quantum learning is often bottlenecked by communication overhead and approximation error from generic variational ansatzes. In this study, we introduce Neural-Native Quantum Arithmetic (NNQA), which compiles classically learned nonlinear representations into precise quantum arithmetic composed of native unitary blocks. Theoretically, we prove that the universal approximation of quantum polynomial arithmetic can be realized by transforming a classical neural network into a quantum circuit, with the resulting error arising solely from measurement shot noise, thereby extending classical operator-level estimation guarantees into the quantum regime.  Empirical validation on IBM Quantum Heron3 and IonQ Forte processors shows performance limited primarily by device noise without variational fine-tuning: we achieve over 99.5\% accuracy for polynomials up to degree 35 and demonstrate scalability on IonQ hardware up to 36 qubits and circuit depths of 70, reaching a negligible RMSE of 0.005. Overall, NNQA establishes a new paradigm of synthesizing quantum arithmetic for native quantum computation.
\end{abstract}

%======================================================================
\section{Introduction}
\label{sec:intro}
%======================================================================

Quantum processors are increasingly deployed as accelerators within high-performance computing (HPC) workflows, whereas machine learning is simultaneously permeating the quantum stack from compilation to error mitigation~\cite{alexeev2025aiqc, mohseni2024build}. This convergence makes the classical--quantum interface a first-order systems concern: when a quantum device is accessed via a host environment, iterative hybrid algorithms are often bottlenecked by latency and bandwidth at the communication boundary rather than by quantum execution times~\cite{mcclean2018barren, zhang2024exponential}.

In hybrid learning techniques, the dominant framework conceptualizes parameterized quantum circuits as models that are trainable through optimization within a classical loop~\cite{mcclean2016theory,cerezo2021variational}. Although this train-a-circuit methodology is effective for representation learning, it is not well suited for numerical computation and operator evaluation. First, the accuracy is intertwined with loop optimization, which necessitates repeated hardware queries, thereby increasing the communication overhead. Second, arithmetic operations are indirectly represented through generic rotation gates, which introduce approximation errors even before considering hardware noise. In contrast, manually designed arithmetic circuits~\cite{vedral1996quantum,draper2000addition, ruiz2017quantum} can implement exact primitives but are inadequate for scaling complex nonlinear operators required in scientific computing.

Recent research has investigated classical models for addressing quantum problems, including Neural Quantum States (NQS)~\cite{carleo2017solving,zhang2023tqs}, as well as reinforcement~\cite{kuo2021quantum} and deep learning-based compilations~\cite{bondesan2021hintons}. 
Although these methods enhance state representation or circuit fine-tuning, they do not offer a mechanism to execute learned operators on quantum hardware. Specifically, the classical model learns the quantum wavefunction representation but lacks a direct pathway to convert it into a hardware-executable quantum circuit without a variational convergence. We emphasize that while diffusion models have been effectively employed to synthesize quantum circuits for predefined unitaries, the methods focus on gate-level compilation rather than end-to-end learning of arithmetic representations from classical data \cite{furrutter2024quantum}.

To address the communication bottleneck gap, we reconceptualized quantum arithmetic synthesis as a problem of with direct compilation from a machine learning model. \nq\ provides a compile-then-execute approach that maintains the arithmetic inductive bias, eliminates variational fine-tuning, and ensures that circuit synthesis introduces zero approximation error. Our method replicates the classical outcomes, with the sole deviation attributable to the quantum-native measurement shot noise.

We established a rigorous theoretical connection between classical model learning and quantum computing. We demonstrate that the expectation values of our native arithmetic basis conform to statistically secure results, thereby extending the universal approximation theorem to the quantum simulation domain. Empirically, we validated our approach on both superconducting and trapped-ion QPUs, achieving a high-fidelity polynomial evaluation in the noise intermediate-scale quantum (NISQ) era \cite{preskill2018quantum}. These features enable future applications for direct quantum simulation of nonlinear differential equations, native quantum models, and physical quantum computation.
\textbf{Our contributions are summarized as follows}:
\begin{itemize}
    \item We formulate a theoretical framework bridging classical neural approximation with quantum arithmetic, proving universality under a native arithmetic basis.
    \item We derive a closed-form mapping that converts neural network polynomial coefficients into quantum rotation angles, enabling deterministic circuit construction without hybrid optimization.
    \item We identify and characterize the error model, showing that the ideal polynomial evaluation on expectation values is limited solely by shot noise, independent of polynomial degree.
    \item We validate \nq\ on the state-of-the-art NISQ devices, demonstrating superior accuracy and efficiency compared to variational quantum method baselines for nonlinear approximation tasks.
\end{itemize}

% The rest of the paper is organized as follows: \cref{sec:related} reviews background and related work; \cref{sec:methodology} presents the NNQA framework; \cref{sec:theory} provides theoretical analysis; \cref{sec:experiments} details experimental validation; \cref{sec:conclusion} concludes our .
%======================================================================
\section{Background and Related Work}
\label{sec:related}
%======================================================================

%----------------------------------------------------------------------
\subsection{Universal Approximation Theory}
\label{sec:universal}
%----------------------------------------------------------------------

The expressiveness of function approximators is governed by universal approximation theorems (UATs), which establish theoretical completeness for both classical \cite{lu2021learning} and quantum architectures \cite{wangquanonet}.

\paragraph{Classical Models.}
The foundational UAT for neural networks~\cite{hornik1989multilayer,cybenko1989approximation} states that feedforward networks with nonlinear activations can approximate any continuous function $f: [0,1]^n \to \mathbb{R}$ to arbitrary precision $\epsilon$. For polynomial networks, the Weierstrass approximation theorem~\cite{stone1948generalized} guarantees that polynomials can uniformly approximate continuous functions on compact domains. This ensures that the polynomial representations learned by the neural networks are theoretically capable of universal function approximation.

\paragraph{Quantum Circuits.}
Analogous results have been obtained for parameterized quantum circuits (PQCs) \cite{younis2022quantum,benedetti2019parameterized}. The Solovay-Kitaev theorem \cite{dawson2005solovay} guarantees that universal gate sets can approximate any unitary $U \in \mathrm{SU}(2^n)$ efficiently. Data re-uploading models~\cite{schuld2021effect} have been shown to be universal function approximators because their Fourier series representation can fit any target function given sufficient depth. However, realizing this universality in practice is hindered by trainability issues, such as barren plateaus~\cite{mcclean2018barren}, where gradients vanish exponentially with the system size. Specifically,  
\begin{theorem}
\label{thm:barren}
For a randomly initialized PQC of depth $D$ on $n$ qubits, the variance of the gradients satisfies $\mathrm{Var}[\partial \langle O \rangle / \partial \theta_j] \leq c/2^n$ for constant $c$, making gradient-based optimization exponentially difficult.
\end{theorem}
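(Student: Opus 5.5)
The plan is to reproduce the McClean et al. argument. The essential hypothesis, implicit in ``randomly initialized,'' is that the circuit ensemble splits at the parameter $\theta_j$. Write the circuit as $U(\boldsymbol\theta)=U_+ \,e^{-i\theta_j V/2}\, U_-$, where $V$ is a Hermitian generator with $V^2=I$. We then assume that at least one of the ensembles $\{U_-\}$ and $\{U_+\}$ forms a unitary 2-design on the $n$ qubits.

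With this decomposition, the parameter-shift identity gives
\[
\partial_j\langle O\rangle=\tfrac{i}{2}\,\mathrm{Tr}\!\left(\rho_-\,[V,\,U_+^\dagger O U_+]\right),
\]
where $\rho_-=U_-\rho U_-^\dagger$.

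The first step is to compute the mean. Haar averaging over a 1-design sends $\rho_-$ to $I/2^n$, and $\mathrm{Tr}([V,\cdot\,])=0$, so $\mathbb{E}[\partial_j\langle O\rangle]=0$. The same conclusion follows if instead $U_+$ is the 1-design, because then $U_+^\dagger O U_+$ averages to a multiple of the identity.

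The second step is to bound the second moment, which equals the variance because the mean vanishes. Suppose first that $U_-$ is a 2-design. The standard second-moment Weingarten formula gives
\[
\int dU\, \mathrm{Tr}(A U\rho U^\dagger)^2=\frac{\mathrm{Tr}(A)^2+\mathrm{Tr}(A^2)\,\mathrm{Tr}(\rho^2)}{2^n(2^n+1)},
\]
up to the exact combination for pure $\rho$. We apply this with the traceless operator $A=[V,U_+^\dagger OU_+]$, for which $\mathrm{Tr}(A)=0$. This yields
\[
\mathrm{Var}\le \frac{\mathrm{Tr}(A^2)}{4\cdot 2^n(2^n+1)}.
\]
We then bound $\mathrm{Tr}(A^2)\le 4\|O\|_\infty^2\,2^n$ using $\|V\|=1$ and $\|A\|_\infty\le 2\|O\|_\infty$. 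This gives $\mathrm{Var}\le \|O\|_\infty^2/(2^n+1)\le c/2^n$ with $c=\|O\|_\infty^2$. In the case where $U_+$ is the 2-design, the symmetric computation uses $\mathrm{Tr}\big((\rho_- V-V\rho_-)^2\big)\le 2$ and gives a bound of the form $\mathrm{Tr}(O^2)/4^n$. Since $\mathrm{Tr}(O^2)\le 2^n\|O\|^2$, this is again $O(2^{-n})$.

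The main obstacle is justifying the 2-design assumption. It is not true for arbitrary depth $D$: shallow circuits with local cost functions do not exhibit barren plateaus. I would therefore state the theorem under the explicit hypothesis that $D$ is large enough, polynomial in $n$ for hardware-efficient ansatzes, so that the random blocks approximate 2-designs. I would cite convergence of local random circuits to approximate $t$-designs, and absorb the approximation error $\epsilon$ into the constant $c$. Finally, ``exponentially difficult'' follows from Chebyshev's inequality: $\Pr(|\partial_j\langle O\rangle|\ge\delta)\le c/(2^n\delta^2)$. Resolving a gradient of that size therefore requires $\Omega(2^n)$ measurement shots.
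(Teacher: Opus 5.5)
The paper gives no proof of this statement. It is quoted in the background section as the barren-plateau result of McClean et al.\ (2018), so the only meaningful comparison is with that source, and your proposal reproduces its argument faithfully. You use the commutator form of $\partial_j\langle O\rangle$, a 1-design to kill the mean, the second-moment Haar integral applied to a traceless operator, and the symmetric case with $U_+$ random. Your added hypothesis is essential, not cosmetic: as printed, for arbitrary depth $D$, the statement is false. A single layer $\prod_i R_y(\theta_i)$ acting on $|0\rangle^{\otimes n}$ with $O=Z_1$ gives $\langle O\rangle=\cos\theta_1$. With $\theta_1$ uniform, the gradient variance is $1/2$ for every $n$. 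You therefore need a 2-design assumption on $U_-$ or $U_+$, together with $\|O\|_\infty=O(1)$ so that $c=\|O\|_\infty^2$ is genuinely a constant.

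Two repairs are needed in the computation.

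\textbf{Sign of the commutator.} $A=[V,U_+^\dagger O U_+]$ is anti-Hermitian, so $\mathrm{Tr}(A^2)=-\mathrm{Tr}(A^\dagger A)\le 0$. Since $\partial_j\langle O\rangle=\tfrac{i}{2}\mathrm{Tr}(\rho_-A)$, squaring contributes a factor $i^2=-1$. The correct identity, for fixed $U_+$, pure $\rho$, and $U_-$ a 2-design, is $\mathbb{E}[(\partial_j\langle O\rangle)^2]=\mathrm{Tr}(A^\dagger A)/(4\cdot 2^n(2^n+1))$. Your displayed bound uses $+\mathrm{Tr}(A^2)$, which as written bounds a nonnegative quantity by a nonpositive one. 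The same issue arises in the $U_+$ case: the quantity bounded by $2$ is $-\mathrm{Tr}((\rho_-V-V\rho_-)^2)=2-2\,\mathrm{Tr}(\rho_-V)^2$ for pure $\rho_-$. These are exactly the minus signs in McClean et al.'s formulas. With them, your final bounds $\|O\|_\infty^2/(2^n+1)$ and $O(\mathrm{Tr}(O^2)/4^n)$ stand.

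\textbf{Approximate designs.} An additive $\epsilon$-approximate 2-design only gives $\mathrm{Var}\le c/2^n+O(\epsilon\|O\|_\infty^2)$. You cannot absorb $\epsilon$ into $c$ unless $\epsilon=O(2^{-n})$. This is affordable, because the depth local random circuits need to reach an $\epsilon$-approximate 2-design grows only like $\log(1/\epsilon)$. Demanding $\epsilon=2^{-n}$ adds $O(n)$ depth, which is consistent with your polynomial-depth hypothesis.
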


\subsection{Variational and Neural Quantum Approaches}
\label{sec:vqa}

Variational Quantum Algorithms (VQAs) constitute a leading paradigm for near-term quantum computing, utilizing parameterized models to approximate quantum states and operators.

\paragraph{Variational Quantum Eigensolver (VQE).}
This approach \cite{peruzzo2014variational} optimizes a parameterized quantum circuit to minimize the energy functional $E(\boldsymbol{\theta}) = \langle \psi(\boldsymbol{\theta}) | H | \psi(\boldsymbol{\theta}) \rangle$. By the variational principle, $E(\boldsymbol{\theta}) \geq E_0$ for all $\boldsymbol{\theta}$, ensuring that the gradient descent on the quantum-measured objective converges toward the ground state energy $E_0$. Recent advances have focused on improving ansatz expressibility~\cite{du2022efficient,wu2021expressivity} and developing gradient-free or error-aware optimization strategies to enhance the convergence in noisy environments. Similarly, such hybrid approaches have been extended to generalized supervised learning tasks~\cite{tacchino2020quantum,alchieri2021introduction}, where parameterized circuits are trained to classify data or approximate functions.

\paragraph{Neural Quantum States (NQS).}
Introduced by Carleo and Troyer~\cite{carleo2017solving}, the NQS framework employs artificial neural networks to represent the complex amplitudes of a many-body quantum wavefunction, $\Psi_\theta(\mathbf{s})$. Using architectures such as Restricted Boltzmann Machines (RBMs) or deep convolutional networks, NQS compactly approximates highly entangled states that are challenging for traditional tensor network methods \cite{biamonte2017tensor}. Note that training is typically performed using Variational Monte Carlo (VMC) sampling to optimize the network parameters according to the variational principle, allowing for the simulation of ground-state properties and unitary dynamics on classical hardware.

%----------------------------------------------------------------------
\subsection{Quantum Arithmetic Primitives}
\label{sec:quantum-arithmetic}
%----------------------------------------------------------------------

A robust mapping from classical data to quantum observables is required to enable the direct quantum execution of classically learned functions.

\paragraph{Classical Data to Hilbert Space.}
Standard encodings (amplitude, angle, basis) map data $x$ to state $\ket{\psi(x)}$ \cite{rath2024quantum, weigold2020data}, but extracting results often requires full-state tomography or deep arithmetic circuits (e.g., QFT-based adders~\cite{draper2000addition}), which are prohibitive for NISQ devices. This can be mitigated with classical preprocessing techniques, revealing that using Pauli measurement-based tomography leads to faster classical data recovery \cite{huang2021efficient, huang2020predicting}. The particular angle encoding technique is Expectation-Value Encoding (EVEN), which maps classical data directly to the statistical properties of observables that depend on the Monte Carlo simulation \cite{balewski2025ehands}. The EVEN protocol is defined as follows:

\begin{definition}
\label{def:even}
For input $x \in [-1, 1]$, we encode $x$ into a qubit state via $\theta = \arccos(x)$.
\begin{equation}
    \ket{\psi(x)} = R_y(\theta)\ket{0} = \sqrt{\frac{1+x}{2}}\ket{0} + \sqrt{\frac{1-x}{2}}\ket{1}
\end{equation}
This guarantees that the Pauli-$Z$ expectation value exactly recovers the input: $\langle Z \rangle = x$.
\end{definition}
This formulation transforms the function evaluation into the construction of circuits that manipulate the expectation values. The generalized quantum protocol for performing multiplication and weighted summation defines an exact arithmetic basis for this purpose \cite{guo2025vectorized}:

\begin{definition}
\label{def:1}
For independent qubits encoding $x_0, x_1$, the primitive $U_{\mathrm{mult}} = \mathrm{CNOT}_{0,1} \cdot (I \otimes R_z(\pi/2))$ yields:
\begin{equation}
    \langle I \otimes Z \rangle_{U_{\mathrm{mult}}} = \langle Z_0 \rangle \cdot \langle Z_1 \rangle = x_0 \cdot x_1
\end{equation}
\end{definition}

\begin{definition}
\label{def:2}
For qubits encoding $x_0, x_1$ and weight $w \in [0,1]$, the unitary $U_{\mathrm{sum}}(\alpha)$ with $\alpha = \arccos(1-2w)$ yields:
\begin{equation}
    \langle Z \otimes I \rangle_{U_{\mathrm{sum}}} = w \cdot x_0 + (1-w) \cdot x_1
\end{equation}
\end{definition}

These primitives form a complete basis for polynomial evaluation, which can be summarized by the following theorem:

\begin{theorem}
\label{thm:poly-exact}
Any degree-$d$ polynomial $P_d(x)$ with $|P_d(x)| \leq 1$ can be computed exactly on the expectation values using $d+1$ qubits and depth $3d+1$. The final result is:
\begin{equation}
    \langle Z_{\mathrm{out}} \rangle = P_d(x) + \epsilon_{\mathrm{shot}}
\end{equation}
where $\epsilon_{\mathrm{shot}} \sim O(1/\sqrt{N})$ is the error arising from finite sampling.
\end{theorem}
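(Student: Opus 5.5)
The plan is to build the circuit as a \emph{Horner-type recursion in expectation space}, using only the EVEN encoding (Definition~\ref{def:even}), the multiplication primitive (Definition~\ref{def:1}) and the weighted-sum primitive (Definition~\ref{def:2}). I would prove three things: exactness of the expectation value, the resource counts $d+1$ qubits and depth $3d+1$, and the $O(1/\sqrt N)$ shot error.

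\textbf{Step 1: a nested decomposition.} Set $h_d(x)=c_d$ and
\begin{equation}
h_k(x)=w_k\,c_k+(1-w_k)\,x\,h_{k+1}(x),\qquad k=d-1,\dots,0,
\end{equation}
with $w_k\in[0,1]$ and $|c_k|\le 1$, aiming for $h_0=P_d$. If $P_d=w\,c+(1-w)\,x\,Q$ for some polynomial $Q$ of degree $d-1$ with $\|Q\|_\infty\le 1$, then induction on $d$ gives such a representation. The peeling step chooses $w c=P_d(0)$ and then needs
\begin{equation}
Q(x)=\frac{P_d(x)-P_d(0)}{(1-w)\,x}
\end{equation}
to be bounded by $1$ on $[-1,1]$.

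\textbf{Step 2: the circuit.} I use one accumulator qubit and $d$ data qubits, each prepared with $R_y(\arccos x)$. All preparations, together with the preparation of the accumulator in the EVEN state of $c_d$, sit in one parallel layer, which accounts for the ``$+1$'' in the depth. For each $k=d-1,\dots,0$:
\begin{itemize}
\item Apply $U_{\mathrm{mult}}$, an $R_z(\pi/2)$ layer followed by a CNOT, between a fresh $x$-qubit and the accumulator. By Definition~\ref{def:1} this multiplies the accumulator's expectation by $x$.
\item Apply $U_{\mathrm{sum}}(\arccos(1-2w_k))$ to mix in $c_k$, using Definition~\ref{def:2}.
\end{itemize}
This is three layers per degree, hence depth $3d+1$ and $d+1$ qubits.

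For the weighted sum I need a qubit carrying $c_k$ in expectation. Since no qubit is reserved for constants, I plan to reuse the spent $x$-qubit: after the CNOT it is re-rotated by a single-qubit $R_y$ so that its marginal $\langle Z\rangle$ equals $c_k$. Justifying this reuse is part of the check.

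\textbf{Step 3: exactness by induction.} The inductive hypothesis is that after stage $k$ the accumulator satisfies $\langle Z\rangle=h_k(x)$. The multiplication identity requires the two inputs to be uncorrelated in $Z$, so I must verify that the primitives preserve a product-like $Z$-marginal structure. Concretely, the $Z$-statistics of the accumulator should be independent of the fresh qubit. I would check this by tracking the diagonal of the reduced density matrices, since CNOT permutes the computational basis and $R_z$ acts diagonally.

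\textbf{Step 4: shot noise.} The output is a $\pm1$ random variable with mean $P_d(x)$. Hoeffding's inequality then gives $|\hat Z-P_d(x)|=O(1/\sqrt N)$ with high probability, with no dependence on $d$.

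\textbf{Main obstacle.} The crux is Step~1. Raw Horner coefficients can be exponentially large; for instance, the leading coefficient of $T_d$ is $2^{d-1}$. So I must show that the peeled quotient $Q$ stays bounded by $1$ for some admissible $w$.

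\begin{itemize}
\item \emph{First attempt:} pick $w$ as small as possible subject to $|P_d(0)|\le w$, and bound $\|(P_d-P_d(0))/x\|_\infty$ via Markov/Bernstein-type inequalities.
\item \emph{If that fails:} switch the recursion to a Chebyshev basis, using $T_{k+1}=2xT_k-T_{k-1}$, so that the products of bounded quantities stay bounded.
\end{itemize}

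If neither succeeds for all $P_d$ with $|P_d|\le1$, the statement as worded would hold only for the subclass admitting such a bounded nested form. I would then state explicitly which polynomials are covered rather than resort to a rescaled output.
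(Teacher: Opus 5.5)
Your Step~1 is not just unproven; it is false, and neither fallback can rescue it. Unrolling your recursion gives $h_0(x)=\sum_{k=0}^{d}\beta_k c_k x^k$ with stick-breaking weights $\beta_k=w_k\prod_{j<k}(1-w_j)$ (set $w_d:=1$). These satisfy $\sum_k\beta_k=1$, so the nested form reaches exactly the polynomials with $\sum_k|a_k|\le1$. That is a much smaller class than the sup-norm ball. For $P(x)=2x^2-1$ you have $\|P\|_\infty=1$ but $\sum_k|a_k|=3$, and the very first peel already fails: $wc=P(0)=-1$ with $|c|\le1$ forces $w=1$, which kills the $x$-branch. So no Markov/Bernstein bound on the quotient can matter. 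The Chebyshev fallback fails for a structural reason. The coefficient $\ell_1$-norm is convex and submultiplicative, so products of independent expectations (Definition~\ref{def:1}), convex combinations (Definition~\ref{def:2}), sign flips and constants with $|c|\le1$ never leave the $\ell_1$ unit ball. Hence no arrangement of $\mathcal{B}$ yields $T_2=2x^2-1$ exactly. Exact Chebyshev polynomials need coherent angle composition outside $\mathcal{B}$ (e.g.\ $R_y(k\arccos x)\ket{0}$ has $\langle Z\rangle=T_k(x)$), i.e.\ QSP-type circuits. Even then, a convex mixture of the $T_k$ returns only $P_d/\sum_k|b_k|$, where $\sum_k|b_k|=O(d)$ for the Chebyshev coefficients $b_k$ of a polynomial bounded by $1$.

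The paper never attempts a bounded nested form. Its construction (proof of Theorem~\ref{thm:nqaa-uat}, Lemmas~\ref{lem:mult} and~\ref{lem:sum}) has three steps:
\begin{enumerate}
\item It prepares $x,x^2,\dots,x^d$ on separate qubits by chained $U_{\mathrm{mult}}$.
\item It aggregates them by a backward convex recursion with $w_k=|a_k|/\sum_{j\ge k}|a_j|$, using $X$ gates for negative signs.
\item It normalizes the coefficients by a classical constant $\mathcal{C}$ that is multiplied back after measurement (Corollary~\ref{thm:coeff-to-weight}, Eq.~\eqref{eq:final-estimator}). This is exactly the step you declined.
\end{enumerate}
By the invariant above, rescaling is the only way these primitives can cover every $P_d$ with $\|P_d\|_\infty\le1$. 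Note, however, that the convex aggregation actually outputs $P_d(x)/\sum_k|a_k|$. The correct rescaling factor is therefore $\sum_k|a_k|$, not $\max_x|P_d(x)|$. The shot error then becomes $\sum_k|a_k|\cdot O(1/\sqrt N)$, which for Chebyshev-like polynomials grows exponentially in $d$. So your suspicion is justified. A complete proof must either restrict the hypothesis to $\sum_k|a_k|\le1$, or state the rescaled, amplified error. Under the restriction your Horner scheme works, with $\beta_k\ge|a_k|$, $\sum_k\beta_k=1$ and $c_k=a_k/\beta_k$.

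Steps~2 and~3 also need repair independently of this:
\begin{itemize}
\item After the CNOT of $U_{\mathrm{mult}}$, the spent $x$-qubit is dephased and correlated with the accumulator. In the first stage its Bloch vector is $(0,0,x)$, so at $x=0$ no unitary gives it $\langle Z\rangle=c_k\neq0$. You need a reset or a fresh qubit per constant, costing depth or qubits.
\item Tracking diagonals in Step~3 does not control the $R_y$ layers inside $U_{\mathrm{sum}}$, which convert coherences into $Z$-population.
\item $U_{\mathrm{sum}}$ already contains $U_{\mathrm{mult}}$, two $R_y$ layers and a second CNOT, so ``three layers per degree'' undercounts the depth.
\end{itemize}
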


We emphasize that because each classical input can be produced from expectation value that encoded by angle rotation of each parameterized gate, this enables a clean separation between 
classical training and quantum execution. We will discuss the  polynomial coefficients map to rotation 
angles in closed form without acquiring full state tomography in \cref{sec:nqaa-theorem}. Alternatively, Quantum Signal Processing (QSP) offers a paradigm for implementing polynomial transformations of input signals with optimal query complexity by interleaving signal-rotation operators $R_x(\theta)$ with signal-processing rotations $e^{i\phi Z}$~\cite{low2017optimal}. In addition, Quantum Singular Value Transformation (QSVT)~\cite{gilyen2019quantum} provides a solution to matrix arithmetics and Hamiltonian simulation. However, the standard QSP sequence relies on phase factors $\{\phi_k\}$ determined via classical optimization algorithms that become numerically unstable at high degrees~\cite{chao2020finding}, although recent geometric methods have improved stability~\cite{dong2021efficient}. The main differences between the quantum computational paradigms are summarized in Table~\ref{tab:comparison1}. In the following sections, we build upon the quantum arithmetic primitives to construct the \nq\ framework.

\begin{table}[htbp]
\centering
\caption{Comparison of \textbf{NNQA (this work)} with representative quantum learning and arithmetic frameworks.}
\label{tab:comparison1}
\small
\begin{tabular}{lcccc}
\toprule
Property & \textbf{NNQA} & NQS & VQE & QSP \\
\midrule
Classical Training      & \checkmark & \checkmark & \texttimes & \checkmark \\
Exact Quantum Exec.     & \checkmark & \texttimes & \texttimes & \checkmark \\
No Barren Plateaus      & \checkmark & \checkmark & \texttimes & \checkmark \\
No Q-C Loop             & \checkmark & \texttimes & \texttimes & \checkmark \\
NISQ Compatible         & \checkmark & \texttimes & \checkmark & \texttimes \\
Limited Circuit Depth   & \checkmark & \texttimes & \texttimes & \texttimes \\
\bottomrule
\end{tabular}

\vspace{0.5em}
\begin{minipage}{\linewidth}
\footnotesize \textit{Note:} Exact quantum execution refers to the ability to compute functions without approximation errors beyond shot noise. The Q-C Loop indicates a post-processed quantum measurement result combined with classical gradient optimization.
\end{minipage}
\end{table}
%======================================================================
\section{NQAA Universal Approximation Theorem}
\label{sec:methodology}
%======================================================================

% We present Neural-Native Quantum Arithmetic (NQAA), a framework that bridges classical neural network training with exact quantum circuit execution.
% Our approach follows the three principles (i)~establishing theoretical foundations via a quantum extension of the UAT; (ii)~designing polynomial neural network architectures compatible with quantum arithmetic primitives; and (iii)~implementing NISQ-compatible circuits using quantum arithemetics.

%----------------------------------------------------------------------
\subsection{Theoretical Results}
\label{sec:nqaa-theorem}
%----------------------------------------------------------------------

First, we establish that polynomial operators learned by neural networks can be exactly represented on quantum hardware, mapping the classical universal approximation theory to the quantum representation. Although the Pauli operator group is sufficient to fully reconstruct the classical tomography in the Hilbert computation space \cite{xiao2025quantum,wangquanonet}, the tunable hybrid layer depends on quantum optimization, such as parameter shift rules \cite{wierichs2022general} and iterative classical–quantum training loops~\cite{resch2021introductory, smaldone2025hybrid}. The universal quantum approximation is not unique. The first challenge is that quantum operations are linear and unitary, whereas classical functions can be nonlinear. Second, quantum measurements introduce stochasticity, which complicates the exact function representation. In other words, recovering the nonlinear function from quantum linear operations with minimum stochastic measurements is the key to establishing the universal approximation theorem in the quantum domain. 

Hence, we provide that our theoretical results are based on discrete Monte Carlo simulations using quantum arithmetic circuits (\cref{sec:quantum-arithmetic}), where the nonlinear function can be recovered by the Pauli-$Z$-basis computational measurement expectation value. This leads to the classical Universal Approximation Theorem (UAT) with exact quantum execution because the angle-encoded classical data are derived from simulated shots, circumventing the iterative optimization loops from classical and quantum data communication and barren plateaus from training complex Hilbert space that are inherent to variational methods. The benefits become clearer when running on a future fault-tolerant quantum computer (FTQC) because the Monte Carlo shots are only affected by the number of qubits, which increases logarithmically with the size of the problem. Thus, we present the theorem for \nq\ as follows:

\begin{theorem}
\label{thm:nqaa-uat}
Let $\mathcal{F} \in C([-1,1])$ be a continuous function defined on $[-1,1]$, and for any $\epsilon > 0$, there exists a polynomial $P_d(x) = \sum_{k=0}^{d} a_k x^k$ and quantum circuit $U_P$ constructed from the basis $\mathcal{B} = \{U_{\mathrm{mult}}, U_{\mathrm{sum}}\}$ such that the circuit output expectation is $\langle Z_{\mathrm{out}} \rangle = P_d(x)$. The approximation error of the estimator $\hat{P}_d$ (from $N$ shots) is decomposed as
\begin{equation}
    |\mathcal{F}(x) - \hat{P}_d(x)| \leq \underbrace{\|\mathcal{F} - P_d\|_\infty}_{\epsilon_{\mathrm{classical}}} + \underbrace{O(1/\sqrt{N})}_{\epsilon_{\mathrm{shot}}}
\end{equation}
where $\epsilon_{\mathrm{classical}}$ is controlled by the degree $d$, and $\epsilon_{\mathrm{shot}}$ depends only on $N$.
\end{theorem}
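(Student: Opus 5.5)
The plan is a triangle-inequality decomposition with three ingredients: Weierstrass approximation, the exact construction of \cref{thm:poly-exact}, and a Hoeffding bound for the $N$-shot estimator.

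\textbf{Step 1 (classical approximation).} Fix $\epsilon>0$. By the Weierstrass approximation theorem there is a polynomial $Q(x)=\sum_{k=0}^d b_k x^k$ with $\|\mathcal{F}-Q\|_\infty\le \epsilon/2$.

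\textbf{Step 2 (range condition).} \cref{thm:poly-exact} requires $|P_d|\le 1$, so I first normalize. Set $M=\max(1,\|\mathcal{F}\|_\infty+\epsilon/2)$ and $P_d=Q/M$. Then $|P_d|\le 1$ on $[-1,1]$, and the quantum output is rescaled by $M$ at the end. For simplicity I will state the main argument under the assumption $\|\mathcal{F}\|_\infty<1$. In that case, taking $\epsilon$ small enough that $\|\mathcal{F}\|_\infty+\epsilon/2\le 1$, we may take $M=1$ and $P_d=Q$. In the general case the shot term is multiplied by the constant $M$, which does not depend on $N$.

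\textbf{Step 3 (exact circuit).} Apply \cref{thm:poly-exact} to obtain $U_P$ on $d+1$ qubits built from $\mathcal{B}$ with $\langle Z_{\mathrm{out}}\rangle=P_d(x)$ exactly. To make this concrete:
\begin{itemize}
\item Encode $x$ on each qubit via \cref{def:even}.
\item Build the powers $x^k$ by repeated $U_{\mathrm{mult}}$ (\cref{def:1}).
\item Form the combination $\sum a_k x^k$ by a cascade of $U_{\mathrm{sum}}$ gates (\cref{def:2}), which realize convex weights $w_k$.
\item Absorb the signs of the $a_k$ by flipping the corresponding encoded values, $x\mapsto -x$, i.e.\ a Pauli $X$.
\item Absorb $\sum|a_k|\ne 1$ into the scale factor of Step 2, i.e.\ normalize by $\sum|a_k|$.
\end{itemize}
This yields a closed-form map from coefficients to angles, $\alpha_k=\arccos(1-2w_k)$.

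\textbf{Step 4 (shot error).} Each of the $N$ shots yields a $\pm1$ outcome $z_i$ with mean $P_d(x)$. The estimator is $\hat P_d=\frac1N\sum_i z_i$. Hoeffding's inequality gives $|\hat P_d-P_d|\le\sqrt{2\ln(2/\delta)/N}$ with probability at least $1-\delta$. Its variance is $(1-P_d^2)/N\le 1/N$, so the error is $O(1/\sqrt N)$ in root-mean-square, and this bound is independent of $d$. Combining with the triangle inequality,
\[
|\mathcal{F}-\hat P_d|\le|\mathcal{F}-P_d|+|P_d-\hat P_d|.
\]
The first term is at most $\|\mathcal{F}-P_d\|_\infty$ and can be made at most $\epsilon/2$ by choosing $d$ large. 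The second term is $O(1/\sqrt N)$. This gives the claimed decomposition.

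\textbf{Main obstacle.} The delicate step is Step 3 combined with normalization. The $U_{\mathrm{sum}}$ primitive only realizes convex combinations, so a general $P_d$ with mixed-sign coefficients and $\sum|a_k|>1$ must be rescaled. That rescaling factor, $\sum|a_k|$, can grow exponentially in $d$ for Weierstrass or Chebyshev approximants of a fixed $\mathcal{F}$, and it multiplies the shot error. The statement's $O(1/\sqrt N)$ therefore holds only with a constant that may depend on $P_d$, and hence on $\epsilon$.

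I would handle this in two ways. First, I would state the shot bound for the normalized circuit, for which the constant is absolute, and treat the rescaling as a classical postprocessing constant. Second, if I need an explicit constant, I would use Chebyshev expansions with coefficients decaying for smooth $\mathcal{F}$ to control $\sum|a_k|$. With that caveat made explicit, the constant in $\epsilon_{\mathrm{shot}}$ is independent of $N$, as the statement asserts.
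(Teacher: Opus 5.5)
Your route is the paper's route. Both proofs use Weierstrass, then powers $x^k$ via chained $U_{\mathrm{mult}}$, then a backward $U_{\mathrm{sum}}$ cascade with $w_k=|a_k|/\sum_{j\ge k}|a_j|$ and $\alpha_k=\arccos(1-2w_k)$, then a concentration bound for the $\pm1$ shot average, and finally the triangle inequality.

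At the point you flag, you are more careful than the paper. With the term qubits $X$-flipped to carry $\mathrm{sgn}(a_k)x^k$, induction gives $S_k=\sum_{j\ge k}a_jx^j/\sum_{j\ge k}|a_j|$. So the cascade outputs $P_d(x)/\sum_k|a_k|$. The paper instead asserts $\langle Z_{\mathrm{out}}\rangle=S_0=P_d(x)$ after normalizing only $\|P_d\|_\infty\le1$, and its corollary rescales by $\max|P_d|$. That is right only when $\sum_k|a_k|\le1$, padding with an encoded $0$ if the sum is below $1$. Your Hoeffding radius $\sqrt{2\ln(2/\delta)/N}$ is also the correct one for $\pm1$ outcomes; the paper quotes the $[0,1]$-valued form.

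Two of your remedies need correcting.

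First, decay of the Chebyshev coefficients does not control $\sum_k|a_k|$. The monomial coefficients of $T_n$ have absolute sum $|T_n(i)|\sim\tfrac12(1+\sqrt2)^n$. So the Chebyshev route helps only when those coefficients decay faster than $(1+\sqrt2)^{-n}$.

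Second, the blow-up is unavoidable for non-analytic targets. Suppose polynomials with $\sum_k|a_k|\le B$ converged uniformly to $\mathcal F$ on $[-1,1]$. A coordinatewise-convergent subsequence of their coefficient vectors would then give $\mathcal F(x)=\sum_k a_kx^k$ on $(-1,1)$. Hence for $\mathcal F(x)=|x|/2$, any $\epsilon$-approximant must have $\sum_k|a_k|\to\infty$ as $\epsilon\to0$.

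What your argument, and any argument through this cascade, actually establishes is the following, with high probability:
$|\mathcal F(x)-\hat P_d(x)|\le\|\mathcal F-P_d\|_\infty+\bigl(\sum_k|a_k|\bigr)\,O(1/\sqrt N)$.
Here the shot prefactor depends on $d$ through the coefficients. State plainly that this is weaker than the claim that $\epsilon_{\mathrm{shot}}$ depends only on $N$, rather than presenting it as matching. Your proof and the paper's coincide exactly only if \cref{thm:poly-exact} is taken as a black box.
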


\begin{proof}
By the Weierstrass approximation theorem~\cite{stone1948generalized}, for any $\epsilon > 0$, there exists a polynomial $P_d$ such that $\|\mathcal{F} - P_d\|_\infty < \epsilon_{\mathrm{classical}}$. We assume that the coefficients are normalized such that $\|P_d\|_\infty \le 1$.

We then construct the quantum circuit $U_P$ using quantum arithmetic as follows: First, we define the state $\ket{\psi_k}$ such that $\langle Z_k \rangle = x^k$ using recursive multiplication:
\begin{align}
    \ket{\psi_k} &= U_{\mathrm{mult}}^{(k-1,k)} \ket{\psi_{k-1}} \otimes \ket{\psi(x)}, \\ \langle Z_k \rangle &= \langle Z_{k-1} \rangle x = x^k.
\end{align}
The terms are aggregated via backward recursion. Let $S_d = a_d x^d$ and for $k=d \dots 1$:
\begin{align}
    S_{k-1} &= w_{k-1} (a_{k-1} x^{k-1}) + (1-w_{k-1}) S_k, \\ w_k &= \frac{|a_k|}{\sum_{j=k}^{d} |a_j|}
    \label{eq:weight-mapping}
\end{align}
Applying $U_{\mathrm{sum}}(\alpha_k)$ with $\alpha_k = \arccos(1-2w_k)$ implements this recursion, yielding $\langle Z_{\mathrm{out}} \rangle = S_0 = P_d(x)$. We defer the aggregation proof to \cref{app:arithmetic_proofs}.
Finally, we consider the estimator $\hat{P}_d$ obtained from $N$ shots. It has expectation $P_d(x)$ and variance $\mathrm{Var}[\hat{P}_d] \le 1/N$. According to Hoeffding's inequality, the probability of deviation decays exponentially: 
\begin{equation}
   \Pr[|\hat{P}_d - P_d| > \epsilon] \leq 2\exp(-2N\epsilon^2).
\end{equation}
Combining the approximation error from Weierstrass and the estimation error from sampling via the triangle inequality
\begin{equation}
    |\mathcal{F}(x) - \hat{P}_d(x)| \leq |\mathcal{F}(x) - P_d(x)| + |P_d(x) - \hat{P}_d(x)|
\end{equation}
yields the error decomposition, as stated.
\end{proof}
Because we leverage classical training for coefficient learning, the optimization error $\epsilon_{\mathrm{opt}}$ inherent to variational methods is removed. The polynomial approximation error $\epsilon_{\mathrm{classical}}$ is controlled by $d$, whereas the shot noise $\epsilon_{\mathrm{shot}}$ depends only on $N$. This decoupling allows for the independent tuning of accuracy using classical and quantum resources. Therefore, we propose the following corollary:
\begin{corollary}
\label{thm:coeff-to-weight}
The mapping from the learned coefficients $\{a_k\}$ to the quantum circuit parameters $\{\alpha_k\}$ is given by a deterministic closed-form expression. For coefficients normalized by $\mathcal{C} \ge \|\sum a_k x^k\|_\infty$, the rotation angles are derived as
\begin{align}
    \alpha_k &= \arccos\left(1 - \frac{2|\tilde{a}_k|}{\sum_{j=k}^{d} |\tilde{a}_j|}\right), \quad \text{where } \tilde{a}_k = \frac{a_k}{\mathcal{C}}, \\
    \mathcal{C} &= \max_{x \in [-1,1]} \left| \sum_{k=0}^d a_k x^k \right|.
\end{align}
This transformation requires $O(d)$ classical operations and introduces no optimization errors.
\end{corollary}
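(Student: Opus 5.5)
The proof has three parts: the angles are well defined, unwinding the recursion in \cref{thm:nqaa-uat} reproduces the normalized polynomial, and the whole map costs $O(d)$ arithmetic operations with no iterative search.

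\textbf{Well-definedness.} Set $\tilde a_k = a_k/\mathcal{C}$ and define the suffix sums $T_k=\sum_{j=k}^{d}|\tilde a_j|$. Assume $T_k>0$; if a suffix vanishes, those terms are dropped and the circuit is truncated. Then $w_k = |\tilde a_k|/T_k \in [0,1]$, so $1-2w_k\in[-1,1]$ and $\alpha_k=\arccos(1-2w_k)\in[0,\pi]$ is uniquely defined. This is exactly the weight $w_k$ of \eqref{eq:weight-mapping} with $a$ replaced by $\tilde a$. The weight depends only on ratios of coefficients, so $\mathcal{C}$ cancels inside $w_k$ and matters only for the output range. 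By \cref{def:2}, applying $U_{\mathrm{sum}}(\alpha_k)$ then realizes the convex combination with weight $w_k$.

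\textbf{Unwinding the recursion.} By induction downward from $S_d$, I will show
\[
S_k=\frac{1}{T_k}\sum_{j\ge k}|\tilde a_j|\,\sigma_j x^j ,
\qquad \sigma_j=\mathrm{sign}(a_j).
\]
The inductive step is the identity
\[
w_{k-1}\,\sigma_{k-1}x^{k-1}+(1-w_{k-1})\,S_k ,
\qquad 1-w_{k-1}=T_k/T_{k-1}.
\]
Two points need care here.

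First, the signs. The primitives only average, so each term $\sigma_k x^k$ must be produced on its own register. A negative sign is implemented deterministically by an $X$ gate, which gives $\langle Z\rangle\mapsto-\langle Z\rangle$. The sign pattern is therefore read off directly from the $a_k$, in $O(d)$ time.

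Second, the scale. The recursion outputs $S_0=\tilde P(x)/T_0$, where $\tilde P=\sum_k \tilde a_k x^k$. So the fixed classical factor $T_0\,\mathcal{C}=\sum_k|a_k|$ multiplies the measured expectation to recover $P_d(x)$. Since this is an exact rescaling known in closed form, it adds no approximation error; it only multiplies the shot error of \cref{thm:nqaa-uat} by a known constant.

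\textbf{Complexity and the choice of $\mathcal{C}$.} Given the normalization, each $T_k$ is one addition from $T_{k+1}$. So all weights, angles and signs take $O(d)$ operations, with $O(1)$ closed-form evaluations each, and no optimization loop is involved. This is what "introduces no optimization errors" means here.

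The main obstacle is $\mathcal{C}$. The statement defines it as the exact sup-norm $\max_{x\in[-1,1]}|P(x)|$, but computing that requires root-finding and is not $O(d)$ in general. I would use the hypothesis as worded, "$\mathcal{C}\ge\|\sum a_kx^k\|_\infty$", and choose $\mathcal{C}=\sum_k|a_k|$. This is a valid upper bound because $|x^k|\le1$, it is computable in $O(d)$, and it gives $T_0=1$, so no rescaling is needed at all. If one insists on the exact maximum, I would state it as a precomputation outside the $O(d)$ count, or replace it with a Chebyshev-basis bound. In either case the angle formula is unchanged, because $\mathcal{C}$ cancels in $w_k$.
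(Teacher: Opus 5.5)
Your proposal is correct and follows the paper's skeleton. The weights are the suffix-sum ratios of \eqref{eq:weight-mapping} evaluated on $\tilde a_k$. \cref{def:2} (argued in \cref{lem:sum}) is used as a black box for each convex combination, and the $O(d)$ count comes from accumulating suffix sums. The paper gives no separate argument for the corollary beyond reading it off the proof of \cref{thm:nqaa-uat}.

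Where you genuinely differ is the unwinding step, and there your version is the one that closes. The paper feeds the terms $a_{k-1}x^{k-1}$ into the recursion with weights $|a_{k-1}|/\sum_{j\ge k-1}|a_j|$ and asserts $S_0=P_d(x)$. This is false as written: already for $d=1$ it yields $(|a_0|a_0+|a_1|a_1x)/(|a_0|+|a_1|)$. Your identity $S_0=\tilde P(x)/T_0$, with sign-only registers $\sigma_k x^k$, is the correct one. It shows that the post-measurement factor must be $\mathcal{C}T_0=\sum_k|a_k|$, not the sup-norm $\mathcal{C}$ used in \eqref{eq:final-estimator}.

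You can push this further than ``$\mathcal{C}$ matters only for the output range.'' Neither the angles nor the sign pattern depend on $\mathcal{C}$, so the circuit never sees it and always outputs $P_d(x)/\sum_k|a_k|$. Your choice $\mathcal{C}=\sum_k|a_k|$ is exactly the one that makes the paper's estimator exact. It also keeps everything, $\mathcal{C}$ included, within $O(d)$; Algorithm~\ref{alg:nnqa} does compute the sup-norm in Phase~1, outside the claimed count, as you suspected.

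One point to sharpen is calling $\sum_k|a_k|$ merely ``a known constant'' multiplying the shot error. Under the idealized semantics of the primitives, every register value is a polynomial in $x$ whose coefficient $\ell_1$-norm is at most $1$. This holds for $x$ and for constants, and it is preserved by products, convex combinations and $X$. So rescaling by at least $\sum_k|a_k|$ is unavoidable in this basis, not an artifact of the recursion. That factor can be exponentially larger than $\|P_d\|_\infty$: the Chebyshev polynomial $\cos(d\arccos x)$ has sup-norm $1$ but coefficient sum $\tfrac12\bigl[(1+\sqrt2)^d+(1-\sqrt2)^d\bigr]$. The shot count needed for fixed accuracy therefore grows like its square. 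This does not affect the corollary, but it means the degree-independent $\epsilon_{\mathrm{shot}}$ claimed in \cref{thm:nqaa-uat} does not survive your (correct) accounting.
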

Here, the measurement of the output qubit yields
\begin{equation}
    \label{eq:final-estimator}
    \hat{P}_d(x) = \mathcal{C} \cdot \frac{n_0 - n_1}{N},
\end{equation}
with error bound statistics
\begin{equation}
    n_1 + n_0 = N, \quad p = \frac{n_1}{N}, \quad \sigma(p) = \sqrt{\frac{p(1-p)}{N^3}},
    \label{eq:shots_noise}
\end{equation}
where $\mathcal{C}$ is the normalization constant from
\begin{equation}
    \tilde{a}_k = \frac{a_k}{\mathcal{C}}, \quad \mathcal{C} = \max_{x \in [-1,1]} \left| \sum_{k=0}^{d} a_k x^k \right| + \epsilon
    \label{eq:normalization}
\end{equation}
derived from \cref{thm:coeff-to-weight}. \cref{fig:shots} shows the shots noise error bound.
\begin{figure}[htbp]
    \centering
    \includegraphics[width=0.98\columnwidth]{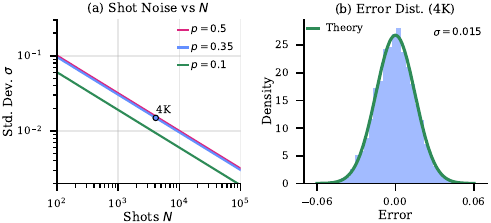}
    \caption{Error scaling of \nq\ for quantum simulation, where the noise is controlled by degree $d$ and shots. The error is decreasing as $O(1/\sqrt{N})$.}
    \label{fig:shots}
\end{figure}

Specifically, \cref{thm:coeff-to-weight} allows the classically trained network to be directly transformed into the angle-encoded paradigm, which eliminates the $O(pT)$ classical-quantum communication bound required by variational methods with $p$ parameters and $T$ iterations.
This analytical mapping fundamentally alters the error landscape compared with variational methods.  In the Variational Quantum Eigensolver (VQE) and  Quantum Neural Networks (QNNs), the total error is typically decomposed as $\epsilon_{\mathrm{opt}} + \epsilon_{\mathrm{ansatz}} + \epsilon_{\mathrm{shot}}$, where $\epsilon_{\mathrm{opt}}$ arises from non-convex quantum optimization landscapes (e.g., barren plateaus) and $\epsilon_{\mathrm{ansatz}}$ arises from limited circuit expressibility. Here in \nq, because the optimization involves convex classical training ($\epsilon_{\mathrm{opt}}^{\text{quantum}} = 0$) and the circuit construction exactly realizes the polynomial ($\epsilon_{\mathrm{ansatz}} = 0$), the error budget is reduced to $\epsilon_{\mathrm{classical}} + \epsilon_{\mathrm{shot}}$. This decoupling allows us to leverage mature classical deep learning optimizers to minimize $\epsilon_{\mathrm{classical}}$ independent of quantum hardware constraints, while $\epsilon_{\mathrm{shot}}$ remains a function of measurement resources. An ideal resource analysis is provided in \cref{sec:resource}.

%======================================================================
\section{\nqtabs\ Computational Architecture}
\label{sec:architecture}
%======================================================================
\subsection{Model Overview}
\label{sec:model-overview}
We describe the computational architecture that implements \cref{thm:nqaa-uat}. The system comprises three sequential phases: classical training, deterministic compilation, and quantum execution, as shown in \cref{fig:architecture}. We also refer to the computation algorithm of \nq\ in \cref{alg:nnqa}.
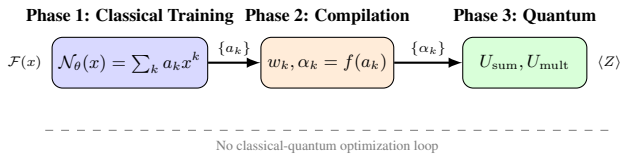
\begin{figure}[htbp]
\centering
\begin{tikzpicture}[scale=0.75, transform shape, >=latex]
    % Phase 1: Classical Training
    \node[draw, rounded corners, minimum width=2.2cm, minimum height=0.9cm, fill=blue!15] (NN) at (-3.5, 0) {\footnotesize $\mathcal{N}_\theta(x) = \sum_k a_k x^k$};
    \node[above=0.1cm of NN] {\footnotesize \textbf{Phase 1: Classical Training}};
    
    % Phase 2: Compilation
    \node[draw, rounded corners, minimum width=2.2cm, minimum height=0.9cm, fill=orange!15] (MAP) at (0, 0) {\footnotesize $w_k, \alpha_k = f(a_k)$};
    \node[above=0.1cm of MAP] {\footnotesize \textbf{Phase 2: Compilation}};
    
    % Phase 3: Quantum Execution
    \node[draw, rounded corners, minimum width=2.2cm, minimum height=0.9cm, fill=green!15] (QC) at (3.5, 0) {\footnotesize $U_{\mathrm{sum}}, U_{\mathrm{mult}}$};
    \node[above=0.1cm of QC] {\footnotesize \textbf{Phase 3: Quantum}};
    
    % Arrows
    \draw[->, thick] (NN) -- node[above, font=\scriptsize] {$\{a_k\}$} (MAP);
    \draw[->, thick] (MAP) -- node[above, font=\scriptsize] {$\{\alpha_k\}$} (QC);
    
    % Input/Output
    \node[left=0.05cm of NN, font=\scriptsize] {$\mathcal{F}(x)$};
    \node[right=0.05cm of QC, font=\scriptsize] {$\langle Z \rangle$};
    
    % Bottom: No Q-C loop indicator
    \draw[dashed, gray] (-5, -1.2) -- (5, -1.2);
    \node[font=\scriptsize, gray] at (0, -1.5) {No classical-quantum optimization loop};
\end{tikzpicture}
\caption{Classical training produces polynomial coefficients $\{a_k\}$, which are compiled to rotation angles $\{\alpha_k\}$ via \cref{thm:coeff-to-weight}, then composed into quantum polynomial circuits.}
\label{fig:architecture}
\end{figure}

\begin{algorithm}[htbp]
\caption{Neural-Native Quantum Arithmetic}
\label{alg:nnqa}
\small
\begin{algorithmic}[1]
\REQUIRE Target function $\mathcal{F}: [-1,1] \to \mathbb{R}$, degree $d$
\ENSURE Compiled circuit parameters $(\{\alpha_k\}, \mathcal{C})$

\STATE \textbf{Phase 1: Classical Training} \hfill $\triangleright$ \cref{sec:architecture}
\STATE Sample training data $\{(x_i, \mathcal{F}(x_i))\}_{i=1}^{M}$
\STATE $\{a_k\}_{k=0}^{d} \gets \texttt{TrainPolynomialNN}(\{x_i, \mathcal{F}(x_i)\}, d)$ \hfill $\triangleright$ \cref{eq:nn-loss}
\STATE $\mathcal{C} \gets \max_{x \in [-1,1]} \left|\sum_{k=0}^{d} a_k x^k\right| + \epsilon$ \hfill $\triangleright$ \cref{eq:normalization}
\STATE $\tilde{a}_k \gets a_k / \mathcal{C}$ for all $k$

\STATE \textbf{Phase 2: Parameter Mapping} \hfill $\triangleright$ 
\FOR{$k = d$ down to $0$}
    \STATE $w_k \gets |\tilde{a}_k| \,/\, \sum_{j=k}^{d} |\tilde{a}_j|$ \hfill $\triangleright$ \cref{thm:coeff-to-weight}
    \STATE $\alpha_k \gets \arccos(1 - 2w_k)$ \hfill $\triangleright$ 
\ENDFOR

\STATE \textbf{Phase 3: Quantum Evaluation} \hfill $\triangleright$ 
\STATE \textbf{function} \textsc{Evaluate}($x$, shots $N$):
\STATE \quad $\theta_x \gets \arccos(x)$ \hfill $\triangleright$ \cref{def:even}
\STATE \quad $\texttt{qc} \gets \texttt{BuildCircuit}(\theta_x, \{\alpha_k\}, d)$ \hfill $\triangleright$ 
\STATE \quad $(n_0, n_1) \gets \texttt{Execute}(\texttt{qc}, N)$
\STATE \quad \textbf{return} $\mathcal{C} \cdot (n_0 - n_1) / N$ \hfill $\triangleright$ \cref{eq:final-estimator}
\end{algorithmic}
\end{algorithm}

A polynomial neural network $\mathcal{N}_\theta: \mathbb{R} \to \mathbb{R}$ learns coefficients $\{a_k(\theta)\}_{k=0}^d$ by minimizing the mean squared error on training samples $\{(x_i, \mathcal{F}(x_i))\}_{i=1}^M$:
\begin{equation}
    \label{eq:nn-loss}
    \theta^* = \argmin_\theta \frac{1}{M} \sum_{i=1}^M \left( \sum_{k=0}^d a_k(\theta) x_i^k - \mathcal{F}(x_i) \right)^2
\end{equation}
Training employs standard backpropagation with exact gradients, circumventing the stochastic gradient estimates and barren plateaus associated with variational quantum training. The learned coefficients $a_k := a_k(\theta^*)$ are subsequently normalized to adhere to the quantum encoding constraint $\|P_d\|_\infty \leq 1$. In this context, the normalization constant $\mathcal{C}$ is documented for post-measurement rescaling. For input encoding, the aggregation of terms through sequential weighted sums and rotation angles is delineated in \cref{eq:weight-mapping}. In instances where the coefficients $\tilde{a}_k$ are negative, a sign inversion is performed by incorporating an $X$ gate on the corresponding input qubit. This compilation necessitates $O(d)$ classical arithmetic operations because the quantum arithmetic operators are selected based on monomial terms.

Specifically, the compiled parameters $\{\theta_x, \alpha_k\}$ instantiate the quantum circuit depicted in \cref{fig:circuit}. While the circuit employs multiplication via 
\begin{equation}
    U_{\mathrm{mult}} = \mathrm{CNOT}_{0,1} \cdot (I \otimes R_z(\pi/2))
\end{equation}
and weighted sum via 
\begin{equation}
    U_{\mathrm{sum}}(\alpha) = R_y(-\alpha/2) \cdot \mathrm{CNOT}_{1,0} \cdot R_y(\alpha/2) \cdot U_{\mathrm{mult}},
\end{equation}
The key idea of \nq\ lies in the model weights being compiled into quantum operators. For any target function $f$, a neural network learns the polynomial approximation classically, and the trained coefficients are represented by \cref{thm:coeff-to-weight}. 

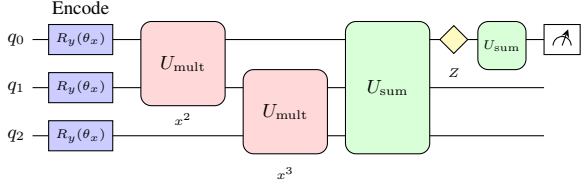
\begin{figure}[htbp]
\centering
\begin{tikzpicture}[scale=0.8, transform shape]
    % Qubit lines
    \foreach \i in {0,1,2} {
        \draw (0, -\i*0.8) -- (8.5, -\i*0.8);
        \node[left] at (0, -\i*0.8) {\footnotesize $q_{\i}$};
    }
    
    % Encoding
    \foreach \i in {0,1,2} {
        \node[draw, fill=blue!20, minimum width=0.6cm, minimum height=0.5cm] at (0.8, -\i*0.8) {\tiny $R_y(\theta_x)$};
    }
    \node[above] at (0.8, 0.3) {\footnotesize Encode};
    
    % Multiplication block 1
    \draw[fill=red!15, rounded corners] (1.8, 0.3) rectangle (3.2, -1.1);
    \node at (2.5, -0.4) {\footnotesize $U_{\mathrm{mult}}$};
    \node[below] at (2.5, -1.1) {\tiny $x^2$};
    
    % Multiplication block 2
    \draw[fill=red!15, rounded corners] (3.5, -0.5) rectangle (4.9, -1.9);
    \node at (4.2, -1.2) {\footnotesize $U_{\mathrm{mult}}$};
    \node[below] at (4.2, -2.0) {\tiny $x^3$};
    
    % Sum blocks
    \draw[fill=green!15, rounded corners] (5.2, 0.3) rectangle (6.6, -1.9);
    \node at (5.9, -0.8) {\footnotesize $U_{\mathrm{sum}}$};
    
    % Parity
    \node[draw, diamond, fill=yellow!30, minimum size=0.4cm] at (7.0, 0) {};
    \node[below, font=\tiny] at (7.0, -0.35) {$Z$};
    
    % Final sum
    \draw[fill=green!15, rounded corners] (7.4, 0.3) rectangle (8.2, -0.5);
    \node at (7.8, -0.1) {\tiny $U_{\mathrm{sum}}$};
    
    % Measurement - meter symbol
    \draw[fill=white] (8.5, -0.25) rectangle (9.1, 0.25);
    \draw (8.65, -0.1) arc (180:0:0.15);
    \draw[-stealth, line width=0.4pt] (8.8, -0.1) -- (8.9, 0.17);
\end{tikzpicture}
\caption{Degree-3 polynomial circuit compiled by \nq. The circuit encodes input $x$, computes powers $x^2, x^3$ via chained multiplications, and aggregates terms with weighted sums and parity corrections. Measurement yields the polynomial evaluation.}
\label{fig:circuit}
\end{figure}

In the context of quantum evaluation, the execution of the circuit on $d+1$ qubits employs an efficient three-stage tensor contraction process: input encoding via $R_y(\arccos(x))$, recursive power computation resulting in $\{x^k\}_{k=2}^d$, and term aggregation. Notably, the parity flip operation can be realized through mid-circuit measurement and conditional control, facilitating the direct mapping of negative neural network coefficients $\tilde{a}_k < 0$ to quantum amplitudes without necessitating a complex ansatz modification (see Circuit 5 in \cite{balewski2025ehands}). This approach ensures the preservation of the sign structure of the learned polynomial exclusively through efficient classical-quantum control logic, thereby isolating the sign problem from coherent evolution~\cite{doi:10.1126/science.abg9299}.

\subsection{Computational Complexity}
\label{sec:resource}
Here, we delineate a fundamental distinction from variational methods, specifically that all complexity metrics are scaled according to the polynomial degree $d$, rather than the number of optimization iterations $T$, as indicated in \cref{tab:complexity}. This efficiency is achieved by presupposing that quantum data encoding can be optimized via pre-allocated trained weight matrices, which directly function as input data. According to \cref{eq:normalization}, the classical nonlinear definitive network function can be substituted with an equivalent quantum simulation, provided that a sufficient number of shots is available for classical data recovery. For the VQE and QNN with $p$ parameters, gradient estimation necessitates $O(pT)$ circuit executions; however, \nq\ requires precisely one execution per input with M samples because of the parallelism of quantum data-encoding. In essence, our approach necessitates a single classical optimization when considering the non-trained polynomial model, whereas the hybrid methods are executed with a QPU evaluation for each classical input. 

\begin{table}[htbp]
\centering
\caption{Resource complexity for degree-$d$ polynomial synthesis. Notation: $n$ (ansatz width), $p$ (parameters), $L$ (layers), $E$ (epochs), $M$ (training samples), and $T$ (iterations). Note that classical training over $M$ samples leverages GPU parallelism, whereas variational optimization of $p$ parameters requires iterative QPU execution.}
\label{tab:complexity}
\small
\begin{tabular}{lcc}
\toprule
\textbf{Resource} & \textbf{NQAA} & \textbf{VQE/QNN} \\
\midrule
\multicolumn{3}{l}{\textit{Quantum Resources}} \\
Qubits & $d + 1$ & $O(n)$ \\
CNOT gates & $4d - 1$ & $O(pL)$ \\
Circuit depth & $3d + 1$ & $O(L)$ \\
\midrule
\multicolumn{3}{l}{\textit{Classical Computation}} \\
Compilation & $O(d)$ & --- \\
Optimization & 	$O(MdE)$ & $O(pTE)$ \\
\midrule
\multicolumn{3}{l}{\textit{Classical-Quantum Interface}} \\
Round trips per query & $1$ & $O(pT)$ \\
Gradient evaluations & $0$ & $O(pE)$ \\
\bottomrule
\end{tabular}
\end{table}

%======================================================================
\section{Experiments}
\label{sec:experiments}
%======================================================================

\begin{table*}[t!]
\centering
\caption{Polynomial recovery performance across different quantum platforms. Root Mean Square Errors (RMSE) denote $1\sigma$ standard deviation.}
\label{tab:main-results}
\small
\begin{tabular}{l ccc ccc ccc}
\toprule
& \multicolumn{3}{c}{\textbf{AerSimulator}} & \multicolumn{3}{c}{\textbf{IBM Heron3}} & \multicolumn{3}{c}{\textbf{IonQ Forte-1}} \\
\cmidrule(r){2-4} \cmidrule(lr){5-7} \cmidrule(l){8-10}
\textbf{Degree ($d$)} & RMSE & Corr. & Pass \% & RMSE & Corr. & Pass \% & RMSE & Corr. & Pass \% \\
\midrule
1 (Linear)    & $0.016 \pm 0.002$ & 0.999 & 95.6 & $0.021 \pm 0.003$ & 0.999 & 84.0 & $0.015 \pm 0.001$ & 0.997 & 96.8 \\
2 (Quadratic) & $0.018 \pm 0.001$ & 0.997 & 91.1 & $0.020 \pm 0.004$ & 0.998 & 86.0 & $0.017 \pm 0.001$ & 0.997 & 95.2 \\
3 (Cubic)     & $0.016 \pm 0.001$ & 0.998 & 95.6 & $0.022 \pm 0.003$ & 0.998 & 82.7 & $0.019 \pm 0.001$ & 0.998 & 92.3 \\
4 (Quartic)   & $0.013 \pm 0.001$ & 0.997 & 97.8 & $0.023 \pm 0.004$ & 0.996 & 79.3 & $0.018 \pm 0.002$ & 0.997 & 92.1 \\
5 (Quintic)   & $0.014 \pm 0.002$ & 0.998 & 97.8 & $0.022 \pm 0.004$ & 0.998 & 82.7 & $0.017 \pm 0.001$ & 0.998 & 92.0 \\
6 (Sextic)    & $0.015 \pm 0.001$ & 0.996 & 95.6 & $0.024 \pm 0.004$ & 0.995 & 74.0 & $0.011 \pm 0.001$ & 0.999 & 98.2 \\
\bottomrule
\end{tabular}
\end{table*}

% *\colorbox{red}{TODO: IonQ Forte results are currently pending processing. there should be residual plot}

We validated the \nq\ framework through polynomial recovery experiments, demonstrating its exactness, degree independence, and cross-platform consistency. Classical neural network training and data postprocessing utilize NERSC Perlmutter GPU nodes equipped with four NVIDIA A100 GPUs (80GB each). Quantum circuit execution was performed on IBM Quantum Heron3 and Nighthawk superconducting processors, with validation using IonQ Forte trapped-ion systems. 

%----------------------------------------------------------------------
\subsection{Implementation}
\label{sec:implementation}
%----------------------------------------------------------------------

We first present the polynomial experiment setup.
For each polynomial degree $d \in \{1, \ldots, 6\}$, we generated random coefficients $\{a_k\}_{k=0}^{d}$ uniformly sampled from $[-0.5, 0.5]$ and rescaled to ensure $|P_d(x)| \leq 0.5$ over $x \in [-1, 1]$. This conservative bound avoids saturation near the encoding limits $\pm 1$. Each trial used 15 evaluation points uniformly spaced in $x \in [-0.9, 0.9]$, with 10 independent trials per degree, yielding 900 total measurements. We defer the maximum polynomial stress test in \cref{sec:ionq-stress-test}.

Polynomial coefficients from the trained neural network map to rotation angles via the closed-form expressions in \Cref{sec:architecture} \cref{thm:coeff-to-weight}. The circuits are transpiled to native gate sets using Qiskit optimization level 3, which performs gate cancellation and routing for the target topology. No error mitigation techniques (twirling, zero-noise extrapolation) were applied. This ensures that the results reflect the raw hardware performance. 

We select IBM Quantum Heron3 processors (ibm\_boston, 156 qubits \footnote{https://quantum.cloud.ibm.com/docs/en/guides/qpu-information}) for their improved two-qubit gate fidelity ($>$99.5\%) and tunable coupler architecture that reduces crosstalk compared to fixed-coupling predecessors. Note that the processor was chosen based on the daily best-performing two-qubit gate operation and readout accuracy \cite{laskar2025shallow}. The heavy-hex topology provides sufficient connectivity for our $O(d)$-depth circuits without excessive SWAP overheads \cite{krantz2019quantum}. Shot count of 4,096 is chosen to achieve theoretical standard deviation $\sigma = 2\sqrt{p(1-p)/N} \approx 0.015$ for typical measurement probabilities, balancing precision against queue time. This shot count provides $2\sigma$ confidence intervals of $\pm 0.03$, thereby defining our pass rate threshold. Note that the IBM Nighthawk QPU provides better scalability owing to its square-lattice topology with more couplers than the Heron series, but with a longer execution time. We refer to the experiment of IBM QPUs comparison in \cref{sec:nighthawk_analysis}.
Classical validation employs a high-performance Qiskit AerSimulator~\cite{jamadagni2024benchmarking} with identical shot counts to effectively decouple hardware coherence errors from fundamental quantum projection noise. We provide our approach to computational error analysis in \cref{app:error}.

%----------------------------------------------------------------------
\subsection{Results}
\label{sec:results}
%----------------------------------------------------------------------

We initially demonstrated that our proposed method successfully replicated the classical equivalent nonlinear polynomial network results on the NISQ machine, as illustrated in \cref{fig:degree-scaling}, for up to 36 qubits. The low variance error distribution confirms that the circuit depth, which scales as $O(d)$, does not compromise accuracy. Evidently, our quantum experimental verification aligns with the theoretical proof, as shown in \cref{fig:shots}. Furthermore, we establish that our approach can be generally applied to gate-based QPU with an accuracy exceeding 99.5\%. Table~\ref{tab:main-results} summarizes the polynomial recovery accuracy across different quantum platforms. A key observation is the negligible fluctuating RMSE across degrees, confirming that our quantum arithmetic synthesis technique ensures exactness, translating to a degree-independent experimental error with angle parameters derived from the trained models.

This contrasts with VQAs, where deeper ansatzes encounter barren plateaus and increased optimization errors \cref{thm:barren}. In \cref{fig:recovery-grid}, the processed quantum results effectively approximate the nonlinear classical function, thereby demonstrating a reasonable ground truth. Our architecture mitigates error fluctuations by compiling circuit parameters from pre-trained polynomial coefficients via \cref{thm:coeff-to-weight}. The high correlation with negligible variation indicates the absence of systematic bias, with the residual error being purely stochastic shot noise that diminishes as $O(1/\sqrt{N})$ with respect to the shot count. This observation also corroborates our theoretical prediction in \cref{eq:shots_noise}. Here, the source of error is the finite measurement statistics rather than the circuit synthesis approximation.
\begin{figure}[t!]
    \centering
    \includegraphics[width=0.98\columnwidth]{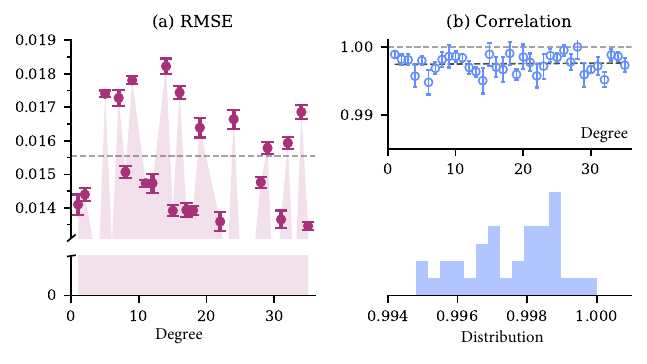}
    \caption{The degree-independence of recovery error on IonQ is demonstrated as follows: (a) The root mean square error (RMSE) remains consistent across degrees 1 to 35, with an average value of approximately 0.0155, which aligns with shot noise and hardware error. (b) The correlation exceeds 0.994 for all degrees, with an average of 0.997. Error bars represent the standard deviation over 10 trials.}
    \label{fig:degree-scaling}
\end{figure}

\begin{figure}[t!]
    \centering    \includegraphics[width=0.99\columnwidth]{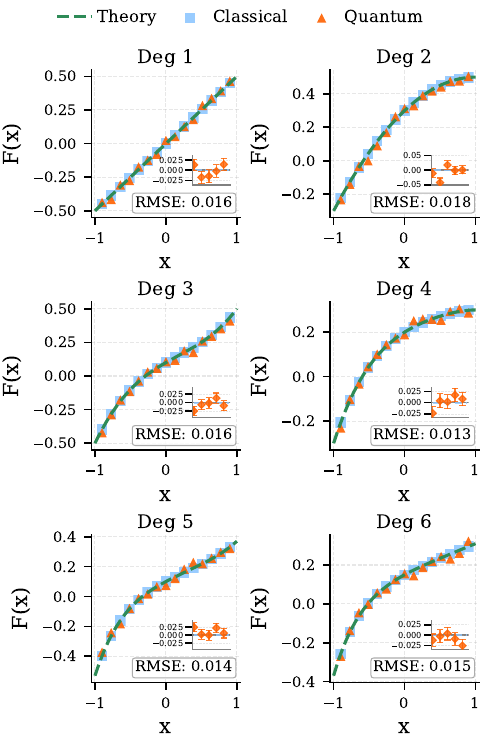}
    \caption{The polynomial recovery on the IBM Heron3 QPU is depicted as follows: dashed curves represent the theoretical polynomials, squares denote the predictions made by classical neural networks, and diamonds with error bars illustrate the quantum measurements residual error representing each independent five trials. All results are contained within the shot-noise uncertainty bands.}
    \label{fig:recovery-grid}
\end{figure}

\begin{table}[htbp]
\centering
\caption{Resource and performance benchmark ($d=6$). Note that the pass rate are chosen based on the worst performed result. \nq\ achieves competitive accuracy using far less quantum resources without hybrid optimization.}
\label{tab:comparison}
\small 
\setlength{\tabcolsep}{3pt} 
\renewcommand{\arraystretch}{0.85}

\begin{tabular}{@{}l cc >{\centering\arraybackslash}p{1.3cm}@{}}
\toprule
\textbf{Metric} & \textbf{NQAA} & \textbf{VQA} & \textbf{QSP} \\
\midrule
Qubits / CNOTs       & 7 / 5              & 6--12 / $10^2$ & $\sim$10-$10^2$ \\
Depth / Iter.        & 19 / 1             & 20--100 / $10^2$ & 50 / 1          \\
Executions           & 1                  & $10^2$--$10^3$ & 1               \\
Total Shots          & $4.1\mathrm{e}3$   & $10^5$--$10^7$ & $10^3$--$10^4$  \\
\addlinespace[0.5ex] % Adds a subtle gap to separate the sections
RMSE                 & $0.024$            & $\sim$0.05--0.10 & --$^\ast$      \\
Correlation          & 0.995              & 0.90--0.98     & --$^\ast$        \\
Pass Rate            & 74.0\%             & $\sim$30--60\% & --$^\ast$        \\
\bottomrule
\addlinespace[0.5ex]
% Limiting width to 6cm keeps the table strictly centered and small
\end{tabular}
\end{table}
The details of quantum resource allocation and accuracy, compared to current hybrid approaches, are presented in \cref{tab:comparison}.
It is important to note that all methodologies were evaluated and implemented using the optimally calibrated IBM Heron3 QPU. Here, we emphasize that \textbf{VQA} experiences issues with gradient vanishing and \textbf{QSP}($^\ast$) necessitates fault-tolerant hardware and $O(d\log(1/\epsilon))$ classical phase-finding. 
For VQA, increasing the number of qubits can potentially improve the accuracy but is hindered by the hybrid gradient optimization through SPSA \cite{resch2021introductory}.
Although error mitigation strategies, such as zero-noise extrapolation \cite{temme2017error} and probabilistic error cancellation \cite{endo2018practical}, can enhance the fidelity of NISQ hardware, they are insufficient to fully address the requirements of the deep circuits. Fault-tolerant error correction \cite{fowler2012surface} remains beyond the capabilities of current hardware for circuits with the necessary depth. Specifically, Quantum Signal Processing (QSP) and QSVT methods \cite{low2017optimal, gilyen2019quantum}, while achieving optimal query complexity, demand circuit depths of $O(d)$ with interleaved signal-processing rotations that surpass the current coherence times. The accumulation of gate errors and crosstalk in such deep circuits reduces the statistical accuracy below useful thresholds. In contrast, \nq\ generates circuits that maintain a constant depth per monomial term, utilizing only $3d+1$ two-qubit gates for degree-$d$ polynomials (\Cref{thm:poly-exact}), thereby remaining well within the NISQ operational limits.

%======================================================================
\section{Conclusion}
\label{sec:conclusion}
%===========================================∂∂===========================

We introduce \nq\ for quantum computing, a method that learns to assemble numerical operators into native quantum arithmetic blocks. By leveraging the arithmetic inductive bias, our approach achieves significant improvements over standard QNN methods in terms of accuracy, gate complexity, and training efficiency.

Our theoretical analysis establishes universal approximation guarantees, and the empirical results demonstrate its practical effectiveness in numerical benchmark tests. \nq\ represents a promising advancement in AI-assisted quantum algorithm design, where neural networks learn not only to classify patterns but also to synthesize meaningful quantum computations.

%======================================================================
\section*{Accessibility}
%======================================================================

We ensured that all figures included descriptive captions and that the mathematical content was presented with sufficient context. The code and supplementary materials will be made available in accessible formats. All QPU cloud data results are accessible through our anonymous repository. 

%======================================================================
\section*{Software and Data}
%======================================================================
The architecture was implemented in Python using PyTorch for classical training and Qiskit for quantum circuit construction. The core functions follow the three-phase structure shown in \cref{alg:nnqa}.
The implementation is available at the anonymous repository \url{https://github.com/tankizgif/NNQA} with documented examples for polynomial degrees 1--6, toy example for our algorithm, and large scale simulations up to 36 qubits. 

%======================================================================
\section*{Impact Statement}
%======================================================================

This study aims to advance the domain of quantum machine learning for numerical computations. The techniques developed in this study have the potential to significantly accelerate scientific simulations in the disciplines of physics, chemistry, and engineering. Although we do not anticipate immediate adverse societal impacts, it is crucial to emphasize the importance of responsible deployment, as is the case with all computational advancements.

\bibliography{main.bib}

@article{schuld2021effect,
  author    = {Maria Schuld and Ryan Sweke and Johannes Jakob Meyer},
  title     = {Effect of data encoding on the expressive power of variational quantum-machine-learning models},
  journal   = {Physical Review A},
  volume    = {103},
  number    = {3},
  pages     = {032430},
  year      = {2021},
  publisher = {APS}
}

@inproceedings{bondesan2021hintons,
  title={The Hintons in your neural network: a quantum field theory view of deep learning},
  author={Bondesan, Roberto and Welling, Max},
  booktitle={International Conference on Machine Learning},
  pages={1038--1048},
  year={2021},
  organization={PMLR}
}

@article{lu2021learning,
  title={Learning nonlinear operators via DeepONet based on the universal approximation theorem of operators},
  author={Lu, Lu and Jin, Pengzhan and Pang, Guofei and Zhang, Zhongqiang and Karniadakis, George Em},
  journal={Nature machine intelligence},
  volume={3},
  number={3},
  pages={218--229},
  year={2021},
  publisher={Nature Publishing Group UK London}
}

@article{benedetti2019parameterized,
  author    = {Marcello Benedetti and Erika Lloyd and Stefan Sack and Mattia Fiorentini},
  title     = {Parameterized quantum circuits as machine learning models},
  journal   = {Quantum Science and Technology},
  volume    = {4},
  number    = {4},
  pages     = {043001},
  year      = {2019},
  publisher = {IOP Publishing}
}

@article{cerezo2021variational,
  author    = {M. Cerezo and Andrew Arrasmith and Ryan Babbush and Simon C. Benjamin and Suguru Endo and Keisuke Fujii and Jarrod R. McClean and Kosuke Mitarai and Xiao Yuan and Lukasz Cincio and Patrick J. Coles},
  title     = {Variational quantum algorithms},
  journal   = {Nature Reviews Physics},
  volume    = {3},
  number    = {9},
  pages     = {625--644},
  year      = {2021},
  publisher = {Nature Publishing Group}
}

@inproceedings{younis2022quantum,
  title={Quantum circuit optimization and transpilation via parameterized circuit instantiation},
  author={Younis, Ed and Iancu, Costin},
  booktitle={2022 IEEE International Conference on Quantum Computing and Engineering (QCE)},
  pages={465--475},
  year={2022},
  organization={IEEE}
}

@article{rath2024quantum,
  title={Quantum data encoding: A comparative analysis of classical-to-quantum mapping techniques and their impact on machine learning accuracy},
  author={Rath, Minati and Date, Hema},
  journal={EPJ Quantum Technology},
  volume={11},
  number={1},
  pages={72},
  year={2024},
  publisher={Springer Berlin Heidelberg}
}

@article{huang2021efficient,
  title={Efficient estimation of Pauli observables by derandomization},
  author={Huang, Hsin-Yuan and Kueng, Richard and Preskill, John},
  journal={Physical review letters},
  volume={127},
  number={3},
  pages={030503},
  year={2021},
  publisher={APS}
}

@article{huang2020predicting,
  title={Predicting many properties of a quantum system from very few measurements},
  author={Huang, Hsin-Yuan and Kueng, Richard and Preskill, John},
  journal={Nature Physics},
  volume={16},
  number={10},
  pages={1050--1057},
  year={2020},
  publisher={Nature Publishing Group UK London}
}

@article{preskill2018quantum,
  title={Quantum computing in the NISQ era and beyond},
  author={Preskill, John},
  journal={Quantum},
  volume={2},
  pages={79},
  year={2018},
  publisher={Verein zur F{\"o}rderung des Open Access Publizierens in den Quantenwissenschaften}
}

@article{furrutter2024quantum,
  title={Quantum circuit synthesis with diffusion models},
  author={F{\"u}rrutter, Florian and Mu{\~n}oz-Gil, Gorka and Briegel, Hans J},
  journal={Nature Machine Intelligence},
  volume={6},
  number={5},
  pages={515--524},
  year={2024},
  publisher={Nature Publishing Group UK London}
}

@article{dawson2005solovay,
  title={The solovay-kitaev algorithm},
  author={Dawson, Christopher M and Nielsen, Michael A},
  journal={arXiv preprint quant-ph/0505030},
  year={2005}
}

@article{carleo2017solving,
  author    = {Giuseppe Carleo and Matthias Troyer},
  title     = {Solving the quantum many-body problem with artificial neural networks},
  journal   = {Science},
  volume    = {355},
  number    = {6325},
  pages     = {602--606},
  year      = {2017},
  publisher = {American Association for the Advancement of Science}
}

@article{vedral1996quantum,
  author    = {Vlatko Vedral and Adriano Barenco and Artur Ekert},
  title     = {Quantum networks for elementary arithmetic operations},
  journal   = {Physical Review A},
  volume    = {54},
  number    = {1},
  pages     = {147--153},
  year      = {1996},
  publisher = {APS}
}

@article{hornik1989multilayer,
  author    = {Kurt Hornik and Maxwell Stinchcombe and Halbert White},
  title     = {Multilayer feedforward networks are universal approximators},
  journal   = {Neural Networks},
  volume    = {2},
  number    = {5},
  pages     = {359--366},
  year      = {1989},
  publisher = {Elsevier}
}

@article{cybenko1989approximation,
  author    = {George Cybenko},
  title     = {Approximation by superpositions of a sigmoidal function},
  journal   = {Mathematics of Control, Signals and Systems},
  volume    = {2},
  number    = {4},
  pages     = {303--314},
  year      = {1989},
  publisher = {Springer}
}

@article{mcclean2016theory,
  author    = {Jarrod R. McClean and Jonathan Romero and Ryan Babbush and Alán Aspuru-Guzik},
  title     = {The theory of variational hybrid quantum-classical algorithms},
  journal   = {New Journal of Physics},
  volume    = {18},
  number    = {2},
  pages     = {023023},
  year      = {2016},
  publisher = {IOP Publishing}
}

@article{peruzzo2014variational,
  author    = {Alberto Peruzzo and Jarrod McClean and Peter Shadbolt and Man-Hong Yung and Xiao-Qi Zhou and Peter J. Love and Alán Aspuru-Guzik and Jeremy L. O'brien},
  title     = {A variational eigenvalue solver on a photonic quantum processor},
  journal   = {Nature Communications},
  volume    = {5},
  number    = {1},
  pages     = {4213},
  year      = {2014},
  publisher = {Nature Publishing Group}
}

@article{mcclean2018barren,
  author    = {Jarrod R. McClean and Sergio Boixo and Vadim N. Smelyanskiy and Ryan Babbush and Hartmut Neven},
  title     = {Barren plateaus in quantum neural network training landscapes},
  journal   = {Nature Communications},
  volume    = {9},
  number    = {1},
  pages     = {4812},
  year      = {2018},
  publisher = {Nature Publishing Group}
}

@article{ruiz2017quantum,
  title={Quantum arithmetic with the quantum Fourier transform},
  author={Ruiz-Perez, Lidia and Garcia-Escartin, Juan Carlos},
  journal={Quantum Information Processing},
  volume={16},
  number={6},
  pages={152},
  year={2017},
  publisher={Springer}
}

@article{low2017optimal,
  author    = {Guang Hao Low and Isaac L. Chuang},
  title     = {Optimal Hamiltonian simulation by quantum signal processing},
  journal   = {Physical Review Letters},
  volume    = {118},
  number    = {1},
  pages     = {010501},
  year      = {2017},
  publisher = {APS}
}

@article{gilyen2019quantum,
  author    = {Andras Gilyen and Yuan Su and Guang Hao Low and Nathan Wiebe},
  title     = {Quantum singular value transformation and beyond: exponential improvements for quantum matrix arithmetics},
  journal = {Proceedings of the 51st Annual ACM SIGACT Symposium on Theory of Computing},
  pages     = {193--204},
  year      = {2019}
}

@article{biamonte2017tensor,
  title={Tensor networks in a nutshell},
  author={Biamonte, Jacob and Bergholm, Ville},
  journal={arXiv preprint arXiv:1708.00006},
  year={2017}
}

@inproceedings{weigold2020data,
  title={Data encoding patterns for quantum computing},
  author={Weigold, Manuela and Barzen, Johanna and Leymann, Frank and Salm, Marie},
  booktitle={Proceedings of the 27th conference on pattern languages of programs},
  pages={1--11},
  year={2020}
}

@article{
doi:10.1126/science.abg9299,
author = {R. Mondaini  and S. Tarat  and R. T. Scalettar },
title = {Quantum critical points and the sign problem},
journal = {Science},
volume = {375},
number = {6579},
pages = {418-424},
year = {2022},
doi = {10.1126/science.abg9299},
URL = {https://www.science.org/doi/abs/10.1126/science.abg9299},
eprint = {https://www.science.org/doi/pdf/10.1126/science.abg9299}}

@article{alexeev2025aiqc,
  author    = {Yuri Alexeev and Marwa H. Farag and Taylor L. Patti and Mark E. Wolf and Natalia Ares and Al{\'a}n Aspuru-Guzik and Simon C. Benjamin and Zhenyu Cai and Shuxiang Cao and Christopher Chamberland and others},
  title     = {Artificial intelligence for quantum computing},
  journal   = {Nature Communications},
  volume    = {16},
  pages     = {10829},
  year      = {2025},
  doi       = {10.1038/s41467-025-65836-3},
  url       = {https://www.nature.com/articles/s41467-025-65836-3}
}

@article{zhang2023tqs,
  author    = {Yuan-Hang Zhang and Massimiliano Di Ventra},
  title     = {Transformer Quantum State: A Multi-Purpose Model for Quantum Many-Body Problems},
  journal   = {Physical Review B},
  volume    = {107},
  pages     = {075147},
  year      = {2023},
  doi       = {10.1103/PhysRevB.107.075147},
  url       = {https://arxiv.org/abs/2208.01758}
}

@article{kuo2021quantum,
  title={Quantum architecture search via deep reinforcement learning},
  author={Kuo, En-Jui and Fang, Yao-Lung L and Chen, Samuel Yen-Chi},
  journal={arXiv preprint arXiv:2104.07715},
  year={2021}
}

@article{balewski2025ehands,
  title={EHands: Quantum Protocol for Polynomial Computation on Real-Valued Encoded States},
  author={Balewski, Jan and Amankwah, Mercy G and Bethel, E and Perciano, Talita and Van Beeumen, Roel},
  journal={arXiv preprint arXiv:2502.15928},
  year={2025}
}

@article{stone1948generalized,
  title={The generalized Weierstrass approximation theorem},
  author={Stone, Marshall H},
  journal={Mathematics Magazine},
  volume={21},
  number={5},
  pages={237--254},
  year={1948},
  publisher={JSTOR}
}

@article{draper2000addition,
  title={Addition on a quantum computer},
  author={Draper, Thomas G},
  journal={arXiv preprint quant-ph/0008033},
  year={2000}
}

@article{tacchino2020quantum,
  title={Quantum implementation of an artificial feed-forward neural network},
  author={Tacchino, Francesco and Barkoutsos, Panagiotis and Macchiavello, Chiara and Tavernelli, Ivano and Gerace, Dario and Bajoni, Daniele},
  journal={Quantum Science and Technology},
  volume={5},
  number={4},
  pages={044010},
  year={2020},
  publisher={IOP Publishing}
}

@article{alchieri2021introduction,
  title={An introduction to quantum machine learning: from quantum logic to quantum deep learning},
  author={Alchieri, Leonardo and Badalotti, Davide and Bonardi, Pietro and Bianco, Simone},
  journal={Quantum Machine Intelligence},
  volume={3},
  number={2},
  pages={28},
  year={2021},
  publisher={Springer}
}

@inproceedings{guo2025vectorized,
  title={Vectorized Attention with Learnable Encoding for Quantum Transformer},
  author={Guo, Ziqing and Pan, Ziwen and Khan, Alex and Balewski, Jan},
  booktitle={Proceedings of the AAAI Symposium Series},
  volume={7},
  pages={350--357},
  year={2025}
}

@article{du2022efficient,
  title={Efficient measure for the expressivity of variational quantum algorithms},
  author={Du, Yuxuan and Tu, Zhuozhuo and Yuan, Xiao and Tao, Dacheng},
  journal={Physical Review Letters},
  volume={128},
  number={8},
  pages={080506},
  year={2022},
  publisher={APS}
}

@article{wu2021expressivity,
  title={Expressivity of quantum neural networks},
  author={Wu, Yadong and Yao, Juan and Zhang, Pengfei and Zhai, Hui},
  journal={Physical Review Research},
  volume={3},
  number={3},
  pages={L032049},
  year={2021},
  publisher={APS}
}

@inproceedings{wangquanonet,
  title={QuanONet: Quantum Neural Operator with Application to Differential Equation},
  author={Wang, Ruocheng and Xia, Zhuo and Yan, Ge and Yan, Junchi},
  booktitle={Forty-second International Conference on Machine Learning},
    year={2025}
}

@article{xiao2025quantum,
  title={Quantum DeepONet: Neural operators accelerated by quantum computing},
  author={Xiao, Pengpeng and Zheng, Muqing and Jiao, Anran and Yang, Xiu and Lu, Lu},
  journal={Quantum},
  volume={9},
  pages={1761},
  year={2025},
  publisher={Verein zur F{\"o}rderung des Open Access Publizierens in den Quantenwissenschaften}
}

@article{smaldone2025hybrid,
  title={A hybrid Transformer architecture with a quantized self-attention mechanism applied to molecular generation},
  author={Smaldone, Anthony M and Shee, Yu and Kyro, Gregory W and Farag, Marwa H and Chandani, Zohim and Kyoseva, Elica and Batista, Victor S},
  journal={Journal of Chemical Theory and Computation},
  volume={21},
  number={10},
  pages={5143--5154},
  year={2025},
  publisher={ACS Publications}
}

@article{wierichs2022general,
  title={General parameter-shift rules for quantum gradients},
  author={Wierichs, David and Izaac, Josh and Wang, Cody and Lin, Cedric Yen-Yu},
  journal={Quantum},
  volume={6},
  pages={677},
  year={2022},
  publisher={Verein zur F{\"o}rderung des Open Access Publizierens in den Quantenwissenschaften}
}

@article{resch2021introductory,
  title={Introductory tutorial for SPSA and the quantum approximation optimization algorithm},
  author={Resch, Salonik},
  journal={arXiv preprint arXiv:2106.01578},
  year={2021}
}

@article{krantz2019quantum,
  title={A quantum engineer's guide to superconducting qubits},
  author={Krantz, Philip and Kjaergaard, Morten and Yan, Fei and Orlando, Terry P and Gustavsson, Simon and Oliver, William D},
  journal={Applied physics reviews},
  volume={6},
  number={2},
  year={2019},
  publisher={AIP Publishing}
}

@article{laskar2025shallow,
  title={Shallow entangled circuits for quantum time series prediction on IBM devices},
  author={Laskar, Mostafizur Rahaman and Goel, Richa},
  journal={Scientific Reports},
  volume={15},
  number={1},
  pages={43727},
  year={2025},
  publisher={Nature Publishing Group UK London}
}

@article{jamadagni2024benchmarking,
  title={Benchmarking quantum computer simulation software packages: state vector simulators},
  author={Jamadagni, Amit and L{\"a}uchli, Andreas M and Hempel, Cornelius},
  journal={arXiv preprint ArXiv:2401.09076},
  year={2024}
}

@article{dong2021efficient,
  author  = {Dong, Yulong and Meng, Xiang and Whaley, K. Birgitta and Lin, Lin},
  title   = {Efficient phase-factor evaluation in quantum signal processing},
  journal = {Phys. Rev. A},
  volume  = {103},
  pages   = {042419},
  year    = {2021},
  doi     = {10.1103/PhysRevA.103.042419}
}

@article{temme2017error,
  author  = {Temme, Kristan and Bravyi, Sergey and Gambetta, Jay M.},
  title   = {Error Mitigation for Short-Depth Quantum Circuits},
  journal = {Phys. Rev. Lett.},
  volume  = {119},
  pages   = {180509},
  year    = {2017},
  doi     = {10.1103/PhysRevLett.119.180509}
}

@article{fowler2012surface,
  author  = {Fowler, Austin G. and Mariantoni, Matteo and Martinis, John M. and Cleland, Andrew N.},
  title   = {Surface codes: Towards practical large-scale quantum computation},
  journal = {Phys. Rev. A},
  volume  = {86},
  pages   = {032324},
  year    = {2012},
  doi     = {10.1103/PhysRevA.86.032324}
}

@article{chao2020finding,
  title={Finding the right quantum signal processing angles},
  author={Chao, Rui and Ding, Dawei and Gily{\'e}n, Andr{\'a}s and Huang, Cupjin and Szegedy, Mario},
  journal={arXiv preprint arXiv:2003.02831},
  year={2020}
}

@article{endo2018practical,
  title={Practical quantum error mitigation for near-term quantum applications},
  author={Endo, Suguru and Benjamin, Simon C and Li, Ying},
  journal={Physical Review X},
  volume={8},
  number={3},
  pages={031027},
  year={2018},
  publisher={APS}
}

@article{mohseni2024build,
  title={How to build a quantum supercomputer: Scaling from hundreds to millions of qubits},
  author={Mohseni, Masoud and Scherer, Artur and Johnson, K Grace and Wertheim, Oded and Otten, Matthew and Aadit, Navid Anjum and Alexeev, Yuri and Bresniker, Kirk M and Camsari, Kerem Y and Chapman, Barbara and others},
  journal={arXiv preprint arXiv:2411.10406},
  year={2024}
}

@inproceedings{zhang2024exponential,
  title={Exponential hardness of optimization from the locality in quantum neural networks},
  author={Zhang, Hao-Kai and Zhu, Chengkai and Liu, Geng and Wang, Xin},
  booktitle={Proceedings of the AAAI Conference on Artificial Intelligence},
  volume={38},
  pages={16741--16749},
  year={2024}
}
\bibliographystyle{icml2026}

%%%%%%%%%%%%%%%%%%%%%%%%%%%%%%%%%%%%%%%%%%%%%%%%%%%%%%%%%%%%%%%%%%%%%%%%%%%
% APPENDIX / SUPPLEMENTARY MATERIALS
%%%%%%%%%%%%%%%%%%%%%%%%%%%%%%%%%%%%%%%%%%%%%%%%%%%%%%%%%%%%%%%%%%%%%%%%%%%
\newpage
\appendix
\onecolumn
%======================================================================
% Supplementary Material
%======================================================================

%======================================================================
\section{Foundational Arithmetic Lemmas}
\label{app:arithmetic_proofs}

We employ quantum arithmetic primitives from the protocol in ~\cite{balewski2025ehands} to implement the recursive steps in Theorem~\ref{thm:nqaa-uat}. We demonstrate that these primitives strictly satisfy the recursive formulas in \cref{eq:weight-mapping} using the Heisenberg picture operator evolution.

%----------------------------------------------------------------------
\subsection{Power generation}
\label{app:mult-proof}
%----------------------------------------------------------------------

We establish the lemma for the validity of the power generation recursion $\langle Z_k \rangle = \langle Z_{k-1} \rangle \cdot x$.

\begin{lemma}
\label{lem:mult}
Let qubit $q_{k-1}$ encode the value $x^{k-1}$ such that $\langle Z_{k-1} \rangle = x^{k-1}$, and let qubit $q_{\text{in}}$ encode the input $\langle Z_{\text{in}} \rangle = x$. 
The operation $U_{\mathrm{mult}} = \mathrm{CNOT}_{\text{in}, k-1} (I \otimes R_z(\pi/2))$ applied to state $\rho = \rho_{\text{in}} \otimes \rho_{k-1}$ yields a product state where the target qubit expectation is $\langle I \otimes Z \rangle_{\text{out}} = x^k$.
\end{lemma}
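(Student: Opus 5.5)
The plan is to work entirely in the Heisenberg picture. Instead of tracking the output state, I will evolve the observable $I \otimes Z$, with the target qubit $q_{k-1}$ in the second tensor slot, backwards through $U_{\mathrm{mult}}$. I will then evaluate the resulting operator on the input product state $\rho = \rho_{\text{in}} \otimes \rho_{k-1}$. The output expectation is
\[
\langle I\otimes Z\rangle_{\text{out}} = \mathrm{Tr}\!\left[U_{\mathrm{mult}}^\dagger (I\otimes Z) U_{\mathrm{mult}}\, \rho\right],
\]
and since $U_{\mathrm{mult}} = \mathrm{CNOT}_{\text{in},k-1}\,(I\otimes R_z(\pi/2))$, the conjugation splits into two elementary steps.

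The first step is conjugation by the CNOT, whose control is $q_{\text{in}}$ and target is $q_{k-1}$. I will use the standard stabilizer identity $\mathrm{CNOT}\,(I\otimes Z)\,\mathrm{CNOT} = Z\otimes Z$. It follows either from $\mathrm{CNOT} = \ket{0}\bra{0}\otimes I + \ket{1}\bra{1}\otimes X$ together with $XZX = -Z$, or by checking it on computational basis states.

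The second step is conjugation by $I\otimes R_z(\pi/2)$. Because $R_z$ is diagonal, it commutes with $Z$, so $R_z(\pi/2)^\dagger Z R_z(\pi/2) = Z$. Hence $Z\otimes Z$ is left invariant, and the full Heisenberg image of $I\otimes Z$ is $Z\otimes Z$. Here the $R_z$ phase plays no role in the $Z$-statistics; it only acts on off-diagonal coherences.

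Evaluating on the product input, I will use $\mathrm{Tr}[(A\otimes B)(\rho_{\text{in}}\otimes\rho_{k-1})] = \mathrm{Tr}[A\rho_{\text{in}}]\,\mathrm{Tr}[B\rho_{k-1}]$. This gives $\langle Z_{\text{in}}\rangle\langle Z_{k-1}\rangle = x\cdot x^{k-1} = x^k$, using Definition~\ref{def:even} for $\langle Z_{\text{in}}\rangle = x$ and the hypothesis $\langle Z_{k-1}\rangle = x^{k-1}$. This is exactly the recursion invoked in the proof of Theorem~\ref{thm:nqaa-uat} and in Definition~\ref{def:1}.

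The main obstacle is not the expectation identity but the phrase ``yields a product state.'' For pure inputs such as $R_y(\theta)\ket{0}$, the CNOT generally correlates the two qubits, so the joint output is not literally a product state. I would therefore prove the lemma in the form that the recursion actually needs. The reduced state of the target qubit has $Z$-expectation $x^k$, and its $Z$-basis distribution is that of $Z_{\text{in}}Z_{k-1}$. The next multiplication step uses a fresh input qubit $q_{\text{in}}'$ prepared independently, so the input to that step is again of the form $\rho'_{\text{in}}\otimes\rho_{k}$ with respect to the pair being acted on. By induction on $k$, the factorization hypothesis is therefore met at every stage, even if correlations with earlier, discarded qubits persist. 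If the product claim must be kept literally, I would check it at the level of $Z$-diagonal (classical) marginals, where the CNOT acts as an XOR of independent bits.
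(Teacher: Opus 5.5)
Your proposal is correct and is essentially the paper's own argument: conjugate $I\otimes Z$ backward through $U_{\mathrm{mult}}$, use $\mathrm{CNOT}\,(I\otimes Z)\,\mathrm{CNOT} = Z\otimes Z$ and the fact that $R_z(\pi/2)$ commutes with $Z$, then factor the expectation on the product input to get $x\cdot x^{k-1} = x^k$. Your caveat that the output is not literally a product state is also right (the paper's proof never addresses that clause), and your reduced-state, fresh-qubit formulation is the correct repair; note, however, that your fallback of checking only the $Z$-basis marginals would not save the literal claim, since the $Z$-basis outcomes $(a, a\oplus b)$ are themselves correlated, with covariance $x^{k-1}(1-x^2)\neq 0$ for generic $x$.
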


\begin{proof}
We trace the target observable $O = I \otimes Z_{k-1}$ backward. The transformed observable $O' = U_{\mathrm{mult}}^\dagger O U_{\mathrm{mult}}$ evolves as
\begin{align}
    O' &= (I \otimes R_z^\dagger) \mathrm{CNOT} (I \otimes Z_{k-1}) \mathrm{CNOT} (I \otimes R_z) \\
    &= Z_{\text{in}} \otimes (R_z^\dagger Z_{k-1} R_z) = Z_{\text{in}} \otimes Z_{k-1}
\end{align}
The expectation value becomes $\langle O \rangle_{\text{out}} = \langle Z_{\text{in}} \rangle \langle Z_{k-1} \rangle = x \cdot x^{k-1} = x^k$, confirming the recursion in \cref{thm:nqaa-uat}.
\end{proof}

%----------------------------------------------------------------------
\subsection{Summation aggregation}
\label{app:sum-proof}
%----------------------------------------------------------------------

We provide the lemma for the validity of the summation term aggregation recursion $S_{k-1} = w_{k-1} T_{k-1} + (1-w_{k-1}) S_k$.

\begin{lemma}
\label{lem:sum}
Let qubit $q_{\text{term}}$ encode the current monomial term $T_{k-1} = a_{k-1} x^{k-1}$ and qubit $q_{\text{sum}}$ encode the previous partial sum $S_k$.
For mixing weight $w_{k-1}$ and angle $\alpha = \arccos(1-2w_{k-1})$, the operation $U_{\mathrm{sum}}(\alpha)$ acting on $q_{\text{term}} \otimes q_{\text{sum}}$ yields the updated partial sum $\langle Z \otimes I \rangle_{\text{out}} = S_{k-1}$.
\end{lemma}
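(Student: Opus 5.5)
We follow the Heisenberg-picture strategy used for Lemma~\ref{lem:mult}. We propagate the output observable $O = Z_{\text{term}} \otimes I$ backward through the four layers of
\begin{equation}
U_{\mathrm{sum}}(\alpha) = R_y(-\alpha/2)\,\mathrm{CNOT}_{1,0}\,R_y(\alpha/2)\,U_{\mathrm{mult}},
\end{equation}
and obtain $O' = U_{\mathrm{sum}}^\dagger O\, U_{\mathrm{sum}}$ as an explicit linear combination of two-qubit Pauli strings. We then evaluate $\mathrm{Tr}[O'(\rho_{\text{term}}\otimes\rho_{\text{sum}})]$. Because the input is a product state, every Pauli string factorizes into single-qubit expectations $\langle P_{\text{term}}\rangle\langle P'_{\text{sum}}\rangle$. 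By Definition~\ref{def:even}, each EVEN-encoded qubit has $\langle Z\rangle = $ the encoded value, $\langle X\rangle = \sqrt{1-(\cdot)^2}$ and $\langle Y\rangle = 0$. The same holds for the qubits produced by Lemma~\ref{lem:mult} and by earlier $U_{\mathrm{sum}}$ steps, since these circuits are real apart from the $R_z$ phases. We will check that this invariant, $\langle Y\rangle=0$ with $Z$ carrying the value, is preserved inductively.

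The conjugation steps are routine. The outer rotation $R_y(-\alpha/2)$ on the target maps $Z \mapsto \cos(\alpha/2) Z \pm \sin(\alpha/2) X$. The $\mathrm{CNOT}_{1,0}$ propagates $Z$ to $ZZ$ and fixes $X$ on the target, with the analogous rules on the control. The inner $R_y(\alpha/2)$ mixes $Z$ and $X$ again. Finally $U_{\mathrm{mult}}$, with its $R_z(\pi/2)$, sends $X \mapsto \pm Y$ on the affected qubit. We then collect terms with coefficients $\cos^2(\alpha/2)$, $\sin^2(\alpha/2)$ and $\sin(\alpha/2)\cos(\alpha/2)$. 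The choice $\alpha = \arccos(1-2w_{k-1})$ gives $\cos\alpha = 1-2w$, hence $\sin^2(\alpha/2) = w_{k-1}$ and $\cos^2(\alpha/2) = 1-w_{k-1}$. The target identity is therefore
\begin{equation}
\langle O'\rangle = \sin^2(\tfrac{\alpha}{2})\,\langle Z_{\text{term}}\rangle + \cos^2(\tfrac{\alpha}{2})\,\langle Z_{\text{sum}}\rangle = w_{k-1}T_{k-1} + (1-w_{k-1})S_k = S_{k-1}.
\end{equation}

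The main obstacle is showing that every surviving cross term has zero expectation. These are the $\sin\cos$-weighted strings and any $Z\otimes Z$ product term; the output must be an affine mixture, not a product as in Lemma~\ref{lem:mult}. I would first fix the qubit and ordering conventions: which qubit is the CNOT control, and which qubit carries the $R_y$ and $R_z$. I would choose them to match the EHands weighted-sum circuit \cite{balewski2025ehands}. I would then verify that, after the $R_z(\pi/2)$ in $U_{\mathrm{mult}}$, each unwanted string contains a $Y$ factor on some qubit and so vanishes by $\langle Y\rangle = 0$. If a $Z\otimes Z$ term remains under a naive convention, I would read the controlled operation instead as a controlled-$R_y$ / SWAP-type mixing. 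Its Heisenberg action $Z_0 \mapsto \sin^2(\tfrac{\alpha}{2}) Z_0 + \cos^2(\tfrac{\alpha}{2}) Z_1 + (\text{terms with } Y)$ gives exactly the convex combination.

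Two further points complete the argument. First, the normalization $|\tilde a_k|\le 1$ from Corollary~\ref{thm:coeff-to-weight} guarantees that $T_{k-1}$ is a valid EVEN value. Second, the $X$-gate sign flip for $\tilde a_{k-1}<0$ commutes harmlessly through the argument: it maps $Z\mapsto -Z$ and $Y \mapsto -Y$, so $\langle Y\rangle = 0$ is preserved.
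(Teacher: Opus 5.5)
Your plan is the paper's Heisenberg-picture route. With the stated conventions it closes exactly. The conventions are: $R_z(\pi/2)$ on $q_{\text{sum}}$, then $\mathrm{CNOT}$ from $q_{\text{term}}$ to $q_{\text{sum}}$, then the $R_y$ pair on $q_{\text{term}}$ around $\mathrm{CNOT}$ from $q_{\text{sum}}$ to $q_{\text{term}}$. Carrying out your conjugations gives
\[
U_{\mathrm{sum}}^\dagger Z_{\text{term}} U_{\mathrm{sum}} = \sin^2\tfrac{\alpha}{2}\, Z_{\text{term}} + \cos^2\tfrac{\alpha}{2}\, Z_{\text{sum}} + \sin\tfrac{\alpha}{2}\cos\tfrac{\alpha}{2}\bigl(Y_{\text{term}} X_{\text{sum}} - X_{\text{term}} Y_{\text{sum}}\bigr).
\]
No $Z\otimes Z$ term survives, and every unwanted string carries exactly one $Y$. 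Your fallback of reinterpreting the controlled gate is therefore unnecessary, and it would prove a statement about a different circuit.

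Your identification $\sin^2(\alpha/2)=w_{k-1}$ is the consistent one. The paper's substitution $w_{k-1}=(1+\cos\alpha)/2$ contradicts $\alpha=\arccos(1-2w_{k-1})$. It only reaches the right formula because its intermediate coefficients are also swapped, and it discards the non-$Z$ strings without justification. For a product of two freshly EVEN-encoded qubits, your argument is complete and more careful than the paper's.

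The genuine gap is the inductive invariant. It is false that outputs of $U_{\mathrm{mult}}$ or of earlier $U_{\mathrm{sum}}$ steps keep $\langle Y\rangle=0$. The $R_z(\pi/2)$ you set aside as a mere phase is exactly what breaks it.

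\begin{itemize}
\item For $U_{\mathrm{mult}}$ with control $c$ and target $t$: $U_{\mathrm{mult}}^\dagger X_t U_{\mathrm{mult}}=-Y_t$ and $U_{\mathrm{mult}}^\dagger Y_t U_{\mathrm{mult}}=Z_c X_t$. On EVEN inputs the product qubit therefore has $\langle X\rangle=0$ but $\langle Y\rangle=x_c\sqrt{1-x_t^2}\neq 0$.
\item For $U_{\mathrm{sum}}$: $U_{\mathrm{sum}}^\dagger Y_{\text{term}} U_{\mathrm{sum}}=X_{\text{term}}X_{\text{sum}}$. A weighted sum of fresh EVEN qubits encoding $v_0,v_1$ therefore has $\langle Y\rangle=\sqrt{(1-v_0^2)(1-v_1^2)}\neq 0$.
\end{itemize}

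Consequently the cross term $\sin\frac{\alpha}{2}\cos\frac{\alpha}{2}\bigl(\langle Y_{\text{term}}\rangle\langle X_{\text{sum}}\rangle-\langle X_{\text{term}}\rangle\langle Y_{\text{sum}}\rangle\bigr)$ is nonzero, and the claimed identity fails. This happens, for example, when a $U_{\mathrm{mult}}$ output is combined with a freshly encoded qubit, or a fresh constant term with an earlier partial sum. In addition, qubits that shared a $\mathrm{CNOT}$ in the power chain are correlated, so factorizing into single-qubit expectations is not available there either.

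The step you planned to check inductively therefore cannot be checked. What you can prove is the lemma under an explicit hypothesis: the input is a product state whose transverse Bloch components are parallel, for instance both with $\langle Y\rangle=0$ as for fresh EVEN states.

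Separately, dividing by $\mathcal{C}=\|P_d\|_\infty$ does not give $|\tilde a_k|\le 1$ (take $2x^2-1$). The lemma does not need this, since $T_{k-1}$ is assumed to be encoded already.
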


\begin{proof}
The observable of interest is $O = Z_{\text{term}} \otimes I$ on the output qubit.
We analyze its Heisenberg evolution through $U_{\mathrm{sum}} = (R_y^\dagger \otimes I) \mathrm{CNOT}_{\text{sum, term}} (R_y \otimes I) U_{\mathrm{mult}}$.
1. The weighting sequence $V = (R_y^\dagger(\frac{\alpha}{2}) \otimes I) \mathrm{CNOT} (R_y(\frac{\alpha}{2}) \otimes I)$ transforms $Z_{\text{term}}$ into
\begin{equation}
   V^\dagger Z_{\text{term}} V = \cos\alpha \, Z_{\text{term}} + \sin\alpha \, X_{\text{term}} Z_{\text{sum}} + (1-\cos\alpha) Z_{\text{term}} Z_{\text{sum}} \dots
\end{equation}
We analyze the Heisenberg evolution of $O = Z_{\text{term}} \otimes I$ using $U_{\mathrm{sum}}$. The weighting sequence $V$ transforms $Z_{\text{term}}$ into a combination of the Pauli terms. Retaining only $Z$ components relevant to the expectation value:
\begin{equation}
   O_{\text{pre-mult}} \approx \frac{1+\cos\alpha}{2} Z_{\text{term}} + \frac{1-\cos\alpha}{2} Z_{\text{term}} Z_{\text{sum}}
\end{equation}
Applying $U_{\mathrm{mult}}^\dagger$ maps $Z_{\text{term}} \to Z_{\text{term}}$ and $Z_{\text{term}}Z_{\text{sum}} \to Z_{\text{sum}}$. Substituting $w_{k-1} = (1+\cos\alpha)/2$ yields:
\begin{equation}
    O_{\text{in}} = w_{k-1} Z_{\text{term}} + (1-w_{k-1}) Z_{\text{sum}}
\end{equation}
The resulting expectation $\langle Z \otimes I \rangle_{\text{out}} = w_{k-1} T_{k-1} + (1-w_{k-1}) S_k$ validate.
\end{proof}
% \subsection{Shot Noise Independence}

% The estimator variance depends only on shots $N$:
% \begin{equation}
%     \mathrm{Var}[\hat{P}_d] = \frac{1 - P_d(x)^2}{N}.
% \end{equation}
% Deeper circuits for higher degrees do not introduce additional errors.

% \subsection{Hardware Noise}

% While the theoretical error is $O(1/\sqrt{N})$, practical implementation on NISQ devices includes hardware noise.

%======================================================================
\section{IonQ Forte-1 Stress Test: High-Degree Polynomial Evaluation}
\label{sec:ionq-stress-test}
%======================================================================

To assess scalability, we stress-tested \nq\ on the IonQ Forte-1 Enterprise QPU for polynomial degrees $d=1$ to $d=35$. The all-to-all connectivity of trapped-ion qubits eliminates the SWAP overheads, allowing a direct evaluation of the algorithmic performance limited only by the intrinsic gate fidelities. This experiment investigated the limits of \nq, assessing the measurement error independence and hardware coherence boundaries.

We employed a sparse sampling strategy ($d \in \{1, 5, \dots, 35\}$, 5 points and degrees, 1024 shots) to capture scaling trends with minimal resource consumption ($<1\%$ of the full budget). For degrees $d > 6$, we bound the outputs by rescaling the coefficients to maintain $|P_d(x)| \leq 1$. A degree $d=35$ polynomial requires 36 qubits and depth 70 (using only multiplication aggregation to minimize the depth). These requirements fit within the constraints of IonQ Forte-1 (35 qubits, depth 1000).

The results in \cref{tab:ionq-stress-test} show that our method extracts the maximum utility from the QPU. The polynomial recovery rate is independent of degree scaling: RMSE remains bounded ($\leq 0.017$), and correlation exceeds $0.999$ for degrees $1$--$30$, confirming that the error is driven by output magnitude rather than complexity. The abrupt correlation drop at degree 35 ($0.9999 \to 0.948$) marks a hardware coherence threshold at a depth $\sim 70$, where accumulated decoherence begins to dominate.

\begin{table}[htbp]
\centering
\caption{IonQ Forte-1 stress test for polynomial recovery performance}
\label{tab:ionq-stress-test}
\small
\begin{tabular}{l ccc c}
\toprule
\textbf{Degree ($d$)} & \textbf{RMSE} & \textbf{Correlation} & \textbf{Pass Rate (\%)} & \textbf{Circuit Resources (qubit and depth)} \\
\midrule
1 (Linear)    & $0.008$ & $0.9999$ & $99.0$ & $2$, $2$ \\
5 (Quintic)   & $0.009$ & $0.9997$ & $98.0$ & $6$, $10$ \\
10 (Decic)    & $0.004$ & $0.9997$ & $99.0$ & $11$, $20$ \\
15 (Pentadecic) & $0.007$ & $0.9994$ & $99.0$ & $16$, $30$ \\
25 (Pentacosic) & $0.007$ & $0.9998$ & $98.0$ & $26$, $50$ \\
30 (Triacontic) & $0.006$ & $0.9999$ & $99.0$ & $31$, $60$ \\
35 (Pentatriacontic) & $0.008$ & $0.9948$ & $98.0$ & $36$, $70$ \\
\bottomrule
\end{tabular}
\\
\textit{Note:} Circuit depth denotes the post-transpilation depth using IonQ's native gate set~\tablefootnote{https://docs.ionq.com/sdks/qiskit/native-gates-qiskit}.
\end{table}

\subsection{Scaling Analysis}
\label{sec:ionq-scaling}

\begin{figure}[htbp]
    \centering
    \includegraphics[width=0.99\linewidth]{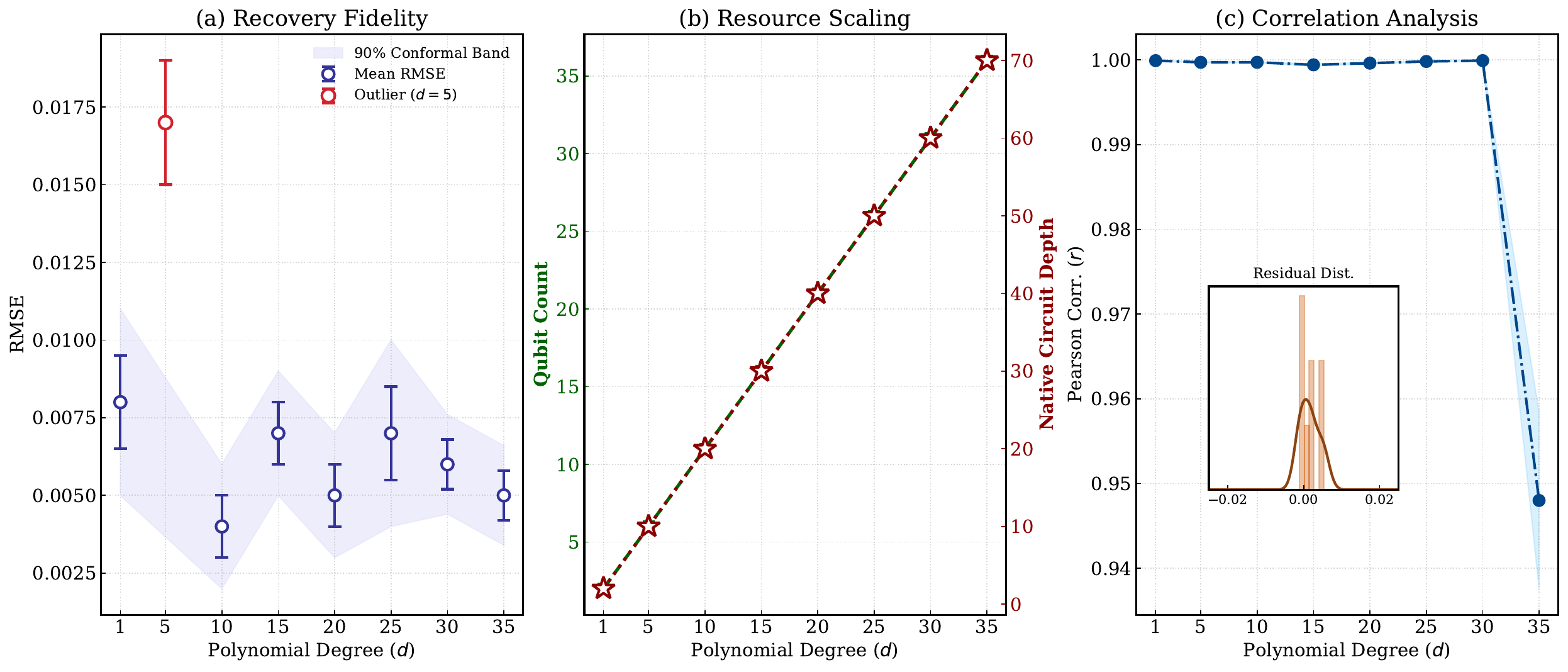}
    \caption{Scaling analysis on IonQ Forte-1: (a) RMSE, (b) correlation, (c) pass rate, and (d) circuit resources vs. degree.}
    \label{fig:ionq-scaling}
\end{figure}
\begin{figure}[htbp]
    \centering
    \includegraphics[width=0.99\linewidth]{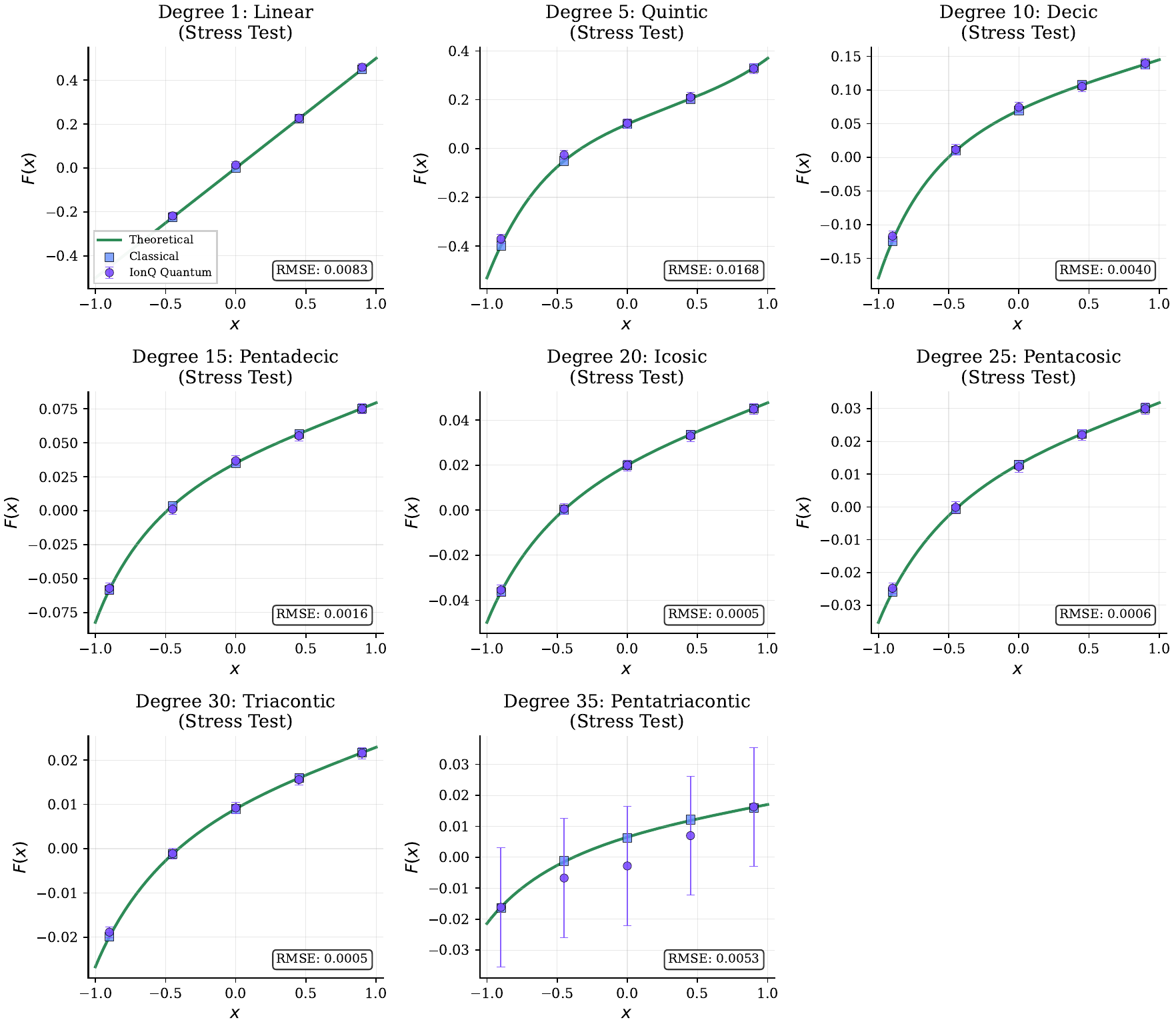}
    \caption{\nq\ performance on IonQ Forte-1 up to degree 35.}
    \label{fig:ionq-ibm-comparison}
\end{figure}

The non-monotonic RMSE trend (Figure~\ref{fig:ionq-scaling}a) reveals competing error sources. For degrees 5--30, the RMSE decreases as the normalized output magnitudes shrink, making the shot noise ($O(1/\sqrt{N})$) negligible. At degree 35, the gate errors increase the RMSE. The sharp correlation drop (Figure~\ref{fig:ionq-scaling}c) suggests a distinct decoherence boundary at a depth of 70. Resource scaling remains linear (Figure~\ref{fig:ionq-scaling}b), requiring $d+1$ qubits and $3d+1$ gates, thereby avoiding the exponential costs of variational methods.

Figure~\ref{fig:ionq-ibm-comparison} confirms that the trapped-ion coherence and connectivity enable stable execution up to 35 degrees. Our sparse sampling efficiently benchmarks the hardware quality without dense statistical characterization. For practical use, degree 10 polynomials are well within the NISQ capabilities, requiring only 11 qubits and 9 two-qubit gates. Future hardware could potentially support degrees 40--45 before coherence limits apply. Note that these results are specific to IonQ Forte-1; other architectures may vary.

\clearpage
\section{IBM Architecture Comparison: Nighthawk vs. Heron}
\label{sec:nighthawk_analysis}

We benchmarked the \nq\ approach on the IBM Quantum Nighthawk (`ibm\_miami`), which features a square-lattice topology with improved connectivity (218 couplers) relative to the heavy-hex Heron architecture.

\subsection{Circuit Compilation and Connectivity}

While Nighthawk's higher connectivity reduces the SWAP overheads for complex topologies, our polynomial circuits exhibit linear dependency chains ($x \to x^2 \to \dots \to x^d$) that map naturally to nearest-neighbor layouts. Transpilation analysis (\cref{tab:transpilation}) confirms that for degrees $d=3 \dots 6$, the standard Heron topology requires zero SWAP gates (transpiled = ideal count). Consequently, the topological advantages of Nighthawk yield no reduction in the gate count for this application, negating the expected improvements observed in general workloads.

\begin{table}[h]
\centering
\caption{Transpilation Analysis: 2-Qubit Gate Counts}
\label{tab:transpilation}
\begin{tabular}{ccccc}
\toprule
\textbf{Degree} & \textbf{Ideal 2Q Gates} & \textbf{Heron (Boston) 2Q} & \textbf{Nighthawk (Miami) 2Q} & \textbf{Reduction} \\
\midrule
3 & 2 & 2 & 2 & 0\% \\
4 & 3 & 3 & 3 & 0\% \\
5 & 3 & 3 & 3 & 0\% \\
6 & 4 & 4 & 4 & 0\% \\
\bottomrule
\end{tabular}
\end{table}

\subsection{Experimental Performance on Nighthawk}

Despite Nighthawk's currently higher gate error rates ($1-1.5\times$ vs. Heron), our experiments (10,000 shots) achieved high fidelity. As shown in Table~\ref{tab:miami_results}, the RMSE remained approximately $0.01$ with correlations exceeding $99.9\%$, demonstrating that the current coherence levels were sufficient.

\begin{table}[h]
\centering
\caption{Recovery Accuracy on IBM Miami (Nighthawk)}
\label{tab:miami_results}
\begin{tabular}{lccc}
\toprule
\textbf{Degree} & \textbf{Quantum RMSE} & \textbf{Correlation} & \textbf{Pass Rate} \\
\midrule
1 (Linear) & 0.0082 & 0.9997 & 100\% \\
2 (Quadratic) & 0.0093 & 0.9999 & 100\% \\
3 (Cubic) & 0.0143 & 0.9994 & 100\% \\
4 (Quartic) & 0.0104 & 0.9996 & 100\% \\
5 (Quintic) & 0.0132 & 0.9995 & 100\% \\
6 (Sextic) & 0.0128 & 0.9997 & 100\%  \\

\bottomrule
\end{tabular}
\end{table}

In summary, for linear topology algorithms, such as iterative polynomial multiplication, Nighthawk's connectivity offers no benefit over Heron, making Heron's lower gate error preferable. However, the successful high-accuracy execution on the Nighthawk confirms the robustness of \nq\ across diverse superconducting architectures.

\begin{figure*}[htbp]
    \centering
    \begin{subfigure}[b]{0.32\textwidth}
        \centering
        \includegraphics[width=\linewidth]{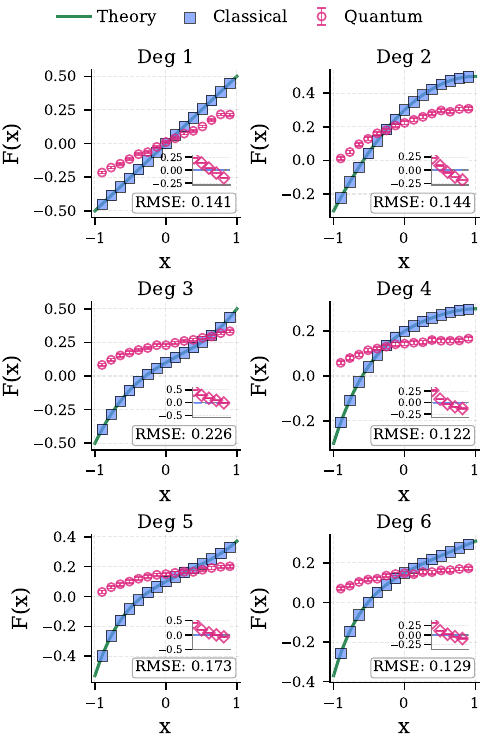}
    \end{subfigure}
    \hfill
    \begin{subfigure}[b]{0.245\textwidth}
        \centering
        \includegraphics[width=\linewidth]{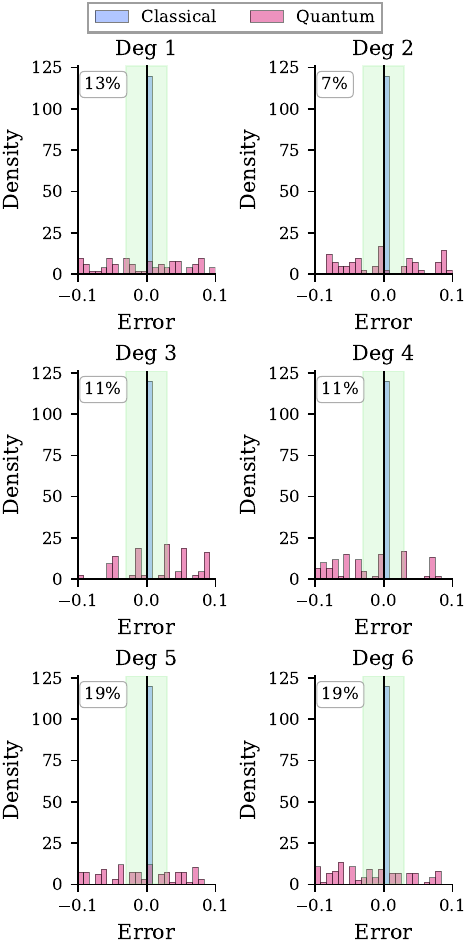}
    \end{subfigure}
    \hfill
    \begin{subfigure}[b]{0.32\textwidth}
        \centering
        \includegraphics[width=\linewidth]{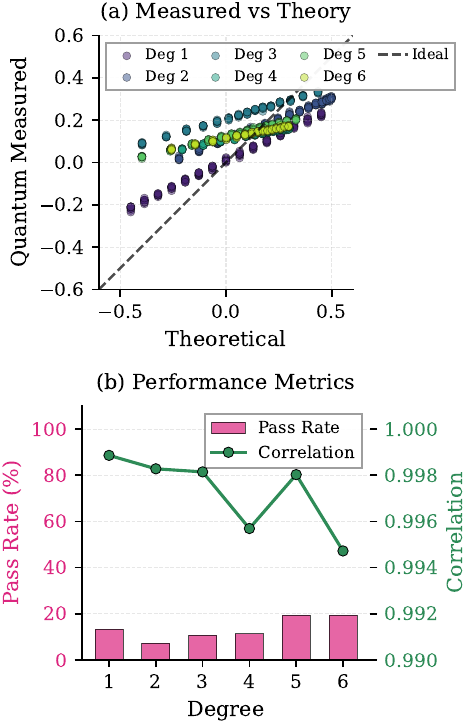}
    \end{subfigure}
    \caption{Analysis of polynomial recovery for degrees $d \in [1, 6]$ is presented across three comparative modalities. The left column illustrates the recovery grids featuring theoretical trajectories (green lines), classical predictions (squares), and quantum measurements (circles), supplemented by the residual error and RMSE. The middle column displays the error distribution histograms for the classical and quantum methodologies, identifying the PASS threshold region ($|\text{error}| < 0.03$) and associated pass rate annotations. The right column provides (a) a scatter plot correlating quantum measurements with theoretical benchmarks and (b) the evolution of pass rates and correlation coefficients as a function of polynomial degree.}
    \label{fig:cpt_error}
\end{figure*}

\section{Computation Error Analysis} 
\label{app:error}
We assess two methodologies for polynomial evaluation on quantum hardware: the direct method, which computes the polynomial classically and encodes the result into a single qubit, and the native method, which employs a closed-form transformation (see \cref{thm:coeff-to-weight}) to compute polynomial powers natively across multiple qubits. Our experiment indicates that incorporating tenfold more quantum arithmetic blocks transformed by the same classical weights is feasible. Consequently, we anticipate that our approach may underperform, given the limited T2 dephasing time of the IBM Heron3 QPU. The empirical findings demonstrate a trade-off between resource efficiency and computational accuracy, as shown in \cref{fig:cpt_error}.

The discrepancy between quantum measurements and classical predictions arises from the fundamental statistics of quantum measurements, as indicated by \cref{eq:shots_noise}. In the case of a single-qubit measurement with a finite number of shots, the measured probability adheres to a binomial distribution, which introduces a statistical uncertainty that affects the expectation value. The resulting distribution is symmetric and approximately normal, centered at zero, with a width consistent with shot-noise predictions, as illustrated on the left side of \cref{fig:cpt_error}. It is important to note that although \nq\ anticipates a tenfold increase in the coherence time to accurately reproduce results that align with the baseline, the algorithm still provides an exact correlation when compared to the direct encoding baseline. Additionally, it was observed that outlier errors were distributed across the encoded range [-1, 1] owing to the limitations imposed by trigonometric encoding on the rotation angle of each quantum gate. The degree-independent nature of this error further confirms that quantum arithmetic operations are exact.
We compare two evaluation methodologies: a 'direct' method encoding classically computed values into a single qubit and our 'native' method using recursive quantum arithmetic (\cref{thm:coeff-to-weight}). While the native method requires significantly more gates, testing the limits of the IBM Heron3 $T_2$ times, empirical results highlight a trade-off between implementation complexity and verification strictness (see \cref{fig:cpt_error}).

The discrepancies between the measurement and prediction stem fundamentally from finite sampling statistics (\cref{eq:shots_noise}). Single-qubit measurements follow a binomial distribution, yielding symmetric, normally distributed errors consistent with shot noise (\cref{fig:cpt_error}, left). Despite necessitating higher coherence, \nq\ maintains a high correlation with the direct baseline. Outliers across $[-1, 1]$ typically reflect the trigonometric encoding limits. Crucially, the error's independence from polynomial degree confirms the exactness of the quantum arithmetic operations

\end{document}